\theoremstyle{plain}
\newtheorem{definition}{Definition}
\newtheorem{theorem}{Theorem}
\theoremstyle{definition}
\newtheorem{example}{Example}
\newtheorem{remark}{Remark}
\DeclareMathOperator{\Tr}{tr}
\DeclareMathOperator{\Int}{int}
\newcommand{\Vect}[1]{{\mathbf{#1}}}
\renewcommand{\Re}{\mathop{\mathrm{Re}}\nolimits}
\begin{document}

\title{A note on the replicator equation with explicit\\ space and global regulation}

\author{Alexander S. Bratus'$^1$, Vladimir P. Posvyanskii$^1$, Artem S. Novozhilov$^{1,}$\footnote{Corresponding author: anovozhilov@gmail.com}\\[3mm]
\textit{\normalsize $^{1}$Applied Mathematics--1, Moscow State University of Railway Engineering,}\\[-1mm]\textit{\normalsize Obraztsova 9, Moscow 127994, Russia}}

\date{}

\maketitle

\begin{abstract}
A replicator equation with explicit space and global regulation is considered. This model provides a natural
framework to follow frequencies of species that are distributed in the space. For this model, analogues to classical notions of the Nash equilibrium and evolutionary stable strategy are provided. A sufficient condition for a uniform stationary state to be a spatially distributed evolutionary stable state is presented and illustrated with examples.

\paragraph{\small Keywords:} Replicator equation, Nash equilibrium, evolutionary stable state, reaction-diffusion systems
\paragraph{\small AMS Subject Classification:} Primary:  35K57, 35B35, 91A22; Secondary: 92D25
\end{abstract}

\section{Preliminaries and notation}\label{sec1}
The general replicator equation comprises well-established biomathematical models that arise in quite distinct evolutionary contexts (see, e.g., \cite{hofbauer1998ega,hofbauer2003egd,schuster1983rd}). In particular, this equation appears in the areas of theoretical population genetics (e.g., \cite{hofbauer1998ega,svirezhev:fme}), prebiotic molecular evolution \cite{bratus2006ssc,bratus2009existence,Eigen1977}, and evolutionary game theory \cite{hofbauer2003egd,smith1982evolution,taylor1978ess}.

Arguably, one of the simplest \textit{replicator equations} takes the form
\begin{equation}\label{eq1:1}
    \dot{v}_i=v_i\left[(\Vect{Av})_i-{f}^{loc}(t)\right],\qquad i=1,\ldots,n.
\end{equation}
Here $\Vect{v}=\Vect{v}(t)=(v_1(t),\ldots,v_n(t))\in \mathbb R^n$ is a vector-function of $n$ variables, $\Vect{A}$ is a constant $n\times n$ matrix with elements $a_{ij}\in\mathbb R$, $(\Vect{Av})_i$ is the $i$-th element of the vector $\Vect{Av}$, $(\Vect{Av})_i=\sum_{i=1}^na_{ij}v_j(t)$, ${f}^{loc}(t)$ is a function, which is determined later, and dot, as usual, is used to denote differentiation with respect to the time variable $t$. It is customarily  supposed that $v_i(t)$ describes a relative abundance of the $i$-th species such that the total concentration $\sum_{i=1}^nv_i(t)$ is kept constant and often equal, without loss of generality, to 1. Therefore, the state space of \eqref{eq1:1} is the simplex $S_n=\{\Vect{v}\colon \sum_{i=1}^nv_i(t)=1,\,v_i(t)\geq 0,\,i=1,\ldots,n\}$, which is invariant under \eqref{eq1:1} if we set ${f}^{loc}(t)=\langle \Vect{Av},\,\Vect{v}\rangle=\sum_{i=1}^n(\Vect{Av})_iv_i$, so that $\langle\cdot\,,\cdot\rangle$ denotes the usual scalar product in $\mathbb R^n$. Since $(\Vect{Av})_i$ gives the net rate of growth of the $i$-th species in our system, the quantity $(\Vect{Av})_i$ is termed as the Malthusian fitness, and hence ${f}^{loc}(t)$ represents the average fitness of the population at time $t$.

System \eqref{eq1:1} is a very well studied object, see, e.g., \cite{cressman2003evolutionary,hofbauer2003egd,hofbauer1998ega}. Very briefly, the rest points of \eqref{eq1:1} are given by the solutions to the following system
\begin{equation}\label{eq1:2}
      (\Vect{Av})_1 =(\Vect{Av})_2=\ldots=(\Vect{Av})_n=\langle\Vect{Av},\,\Vect{v}\rangle=\beta, \quad
        \Vect{v} \in  S_n.
\end{equation}
In general, \eqref{eq1:2} can have no solutions in $\Int S_n=\{\Vect{v}\colon \sum_{i=1}^nv_i=1,\,v_i>0,\,i=1,\ldots,n\}$, a unique solution, or infinitely many solutions. The necessary condition to have a solution to \eqref{eq1:2} is the linear dependence of the rows of $\Vect{A}$, which we denote $\textbf{A}_i$, and the vector $\Vect{1}_n=(1,1,\ldots,1)$:
$$
\mu_0\Vect{1}_n+\sum_{i=1}^n\mu_i\Vect{A}_i=0,\quad \mu_0\neq 0,\quad \sum_{i=1}^n\mu_i\neq 0.
$$

A natural way to derive the replicator equation \eqref{eq1:1} from the first principles is to start with a \textit{selection system} of the form
\begin{equation}\label{eq1:1a}
\dot{y}_i=F_i(\Vect{v})y_i,\qquad i=1,\ldots,n,
\end{equation}
where $\Vect{y}=(y_1(t),\ldots,y_n(t))\in\mathbb R^n_+$ is a vector of absolute sizes, and $F_i(\Vect{v})$ denotes the Malthusian fitness (the per capita birth rate) of species $y_i$, which can depend on the structure of the total population at the time $t$. Assuming that $\Vect{y}$ does not tend to zero, the change of the variables $v_i=y_i/(\sum_{i=1}^ny_i)$ leads to the replicator equation \eqref{eq1:1}, if $F_i(\Vect{v})=(\Vect{Av})_i$. Therefore, it is equivalent to study the selection system \eqref{eq1:1a} or the replicator equation \eqref{eq1:1} (see \cite{karev2009,karev2010} for more details).

As it was mentioned, the replicator equation \eqref{eq1:1} naturally arises in the evolutionary game theory (for the origin, see \cite{smith1982evolution,smith1973logic}). There exists a parallel between the concepts of the game theory with a payoff matrix $\Vect{A}$ and the behavior of the solutions to the replicator equation \eqref{eq1:1} \cite{hofbauer1998ega}. In particular, one of the central notions of the game theory, the \textit{Nash equilibrium} $\Vect{\hat{v}}$, is defined as such $\Vect{\hat{v}}\in S_n$ for which
\begin{equation}\label{eq1:3}
    \langle\Vect{v},\,\Vect{A\hat{v}}\rangle\leq \langle\Vect{\hat{v},\,\Vect{A\hat{v}}}\rangle
\end{equation}
for any $\Vect{v}\in S_n$; and an \textit{evolutionary stable state} (ESS) $\Vect{\hat{v}}\in S_n$ is defined as
\begin{equation}\label{eq1:4}
    \langle\Vect{\hat{v}},\,\Vect{A{v}}\rangle> \langle\Vect{{v},\,\Vect{A{v}}}\rangle
\end{equation}
for any $\Vect{v}\neq\Vect{\hat{v}}$ in a neighborhood of $\Vect{\hat{v}}\in S_n$.

It can be shown (see \cite{hofbauer1998ega} for the details and proofs) that if $\Vect{\hat{v}}$ is a Nash equilibrium of the game with payoff matrix $\Vect{A}$, then $\Vect{\hat{v}}$ is a rest point of \eqref{eq1:1}. Moreover, if $\Vect{\hat{v}}$ is a rest point of \eqref{eq1:1} and Lyapunov stable, then it is a Nash equilibrium; and if $\Vect{\hat{v}}$ is ESS, then it is an asymptotically stable rest point of \eqref{eq1:1}.

We remark that model \eqref{eq1:1} is a system of ordinary differential equations (ODE), i.e., it is a mean-field model. A significant attention was drawn to the replicator equation in the case when heterogeneous spatial structure can be included into the model formulation \cite{dieckmann2000}. One of the suggested solution was spatially explicit models (see \cite{boerlijst1990ssa,boerlijst1991sws,cronhjort1994hvp} for the models of molecular evolution). In general, there are several different approaches to include spatial structure into the replicator equation. The solution to the problem when all the diffusion rates are equal is straightforward: in this case, following the ecological approach, we can just add the Laplace operator to the right hand sides of \eqref{eq1:1} (this was used, e.g., in \cite{fisher1937waa,hadeler1981dfs}). However, the assumption of the equal diffusion rates would be too stringent in the general situation. To overcome this problem, Vickers et al. \cite{cressman2003evolutionary,hutson1995sst,vickers1989spa} introduced a special form of the population regulation to allow for different diffusion rates. In these works a nonlinear term is used that provides a \textit{local} regulation of the populations under question to keep the total population size constant, although no particular biological mechanism is known that lets individuals adapt their per capita birth and death rates to local circumstances \cite{ferriere2000ads}. A more straightforward approach, in our view, would be to start with a spatially explicit selection system of the form
$$
\frac{\partial y_i}{\partial t}=F_i(\Vect{v})y_i+d_i\Delta y_i,\qquad i=1,\ldots,n
$$
and apply transformation $v_i=y_i/(\sum_{i=1}^n\int_{\Omega} y_i\,d\Vect{x})$. Here $y_i=y_i(\Vect{x},t),\,\Vect{x}\in\Omega\subset \mathbb R^m$, and $d_i>0,\,i=1,\ldots,n,$ are diffusion coefficients. This approach is similar to the lines how the replicator equation \eqref{eq1:1} can be obtained from \eqref{eq1:1a}. In this way, assuming impenetrable boundary of the area $\Omega$,  we automatically obtain the condition of the \textit{global} population regulation
$$
\sum_{i=1}^n\int_{\Omega}v_i(\Vect{x},t)\,d\Vect{x}=1,
$$
which was considered in some earlier works on the mathematical models of the prebiotic molecular evolution \cite{bratus2009stability,bnp2010,bratus2006ssc,bratus2009existence,weinberger1991ssa}. Here we extend this approach to the general replicator equation.

The rest of the paper is organized as follows. In Section \ref{sec2} the model formulation is presented, together with some additional definitions and notation. Section \ref{sec3} is devoted to stability analysis of spatially homogeneous equilibria of the distributed replicator equation. In Section \ref{sec4} we formulate possible extensions of the notions of the Nash equilibrium and ESS for our spatially explicit model and present some consequences of the new definitions. In Section \ref{sec5} we derive sufficient conditions for a distributed ESS along with some illustrative examples.

\section{Replicator equation with explicit space}\label{sec2} Let $\Omega$ be a bounded domain, $\Omega\in \mathbb R^m,\,m=1,\,2,$ or $3$, with a piecewise-smooth boundary $\Gamma$. In the following we assume, without loss of generality, that the volume of $\Omega$ is equal to 1, i.e., $\int_{\Omega}\,d\textbf{x}$=1. A spatially explicit analogue to \eqref{eq1:1} is given by the following reaction-diffusion system
\begin{equation}\label{eq2:1}
    \partial_tu_i=u_i \left[(\Vect{Au})_i-{f}^{sp}(t)\right]+d_i\Delta u_i,\quad i=1,\ldots,n,\quad t>0.
\end{equation}
Here $u_i=u_i(\textbf{x},t),\,\textbf{x}\in\Omega,\,\partial_t=\frac{\partial}{\partial t}\,,\,\Delta$ is the Laplace operator, in Cartesian coordinates $\Delta=\sum_{k=1}^m\frac{\partial^2}{\partial x_k^2}\,,\,\Vect{A}$ is a given constant matrix, and $d_i>0,\,i=1,\ldots,n$ are the diffusion coefficients. The initial conditions are $u_i(x,0)=\varphi_i(x),\,i=1,\ldots,n,\,$ and the form of ${f}^{sp}(t)$ will be determined later.

It is natural to assume that we consider closed systems (see also
\cite{weinberger1991ssa}), i.e., we have the boundary conditions
\begin{equation}\label{eq2:2}
    \partial_{\textbf{n}}u_i|_{\textbf{x}\in\Gamma}=\left.\frac{\partial u_i(\textbf{x},t)}{\partial
    \textbf{n}}\right|_{\textbf{x}\in\Gamma}=0,\quad t>0,\quad i=1,\ldots,n,
\end{equation}
where $\textbf{n}$ is the normal vector to the boundary $\Gamma$.

As it was discussed in Section \ref{sec1}, the global regulation of the total
species concentrations occurs in the system, such that
\begin{equation}\label{eq2:3}
\sum_{i=1}^n \int_{\Omega}u_i(\textbf{x},t)\,d\textbf{x}=1
\end{equation}
for any time moment $t$. This condition is analogous to the condition
for the constant concentration in the finite-dimensional
case \cite{hofbauer1998ega}. From the boundary condition \eqref{eq2:2}
and the integral invariant \eqref{eq2:3} the expression for the
function ${f}^{sp}(t)$ follows:
\begin{equation}\label{eq2:4}
    {f}^{sp}(t)=\sum_{i=1}^n\int_{\Omega}u_i(\Vect{Au})_i\,d\textbf{x}=\int_{\Omega}\langle \Vect{Au},\,\Vect{u}\rangle\,d\textbf{x}.
\end{equation}

Suppose that for any fixed $\textbf{x}\in\Omega$ each function $u_i(\textbf{x},t)$ is
differentiable with respect to variable $t$, and belongs to the
Sobolev space $W^1_2(\Omega)$ if $m=1$ or $W_2^2(\Omega)$ if $m=2,3$ as the function of $\textbf{x}$ for any fixed
$t>0$. Here $W_{2}^k,\,k=1,2$  is the space of functions, which have square integrable derivatives with respect to $x\in \Omega$ up to the order $k$. Note, that from the embedding theorem (e.g., \cite{cantrell2003spatial,mikhlin1964vmm}) we have that any function from the space $W_2^k(\Omega)$ is continuous, except possibly on the set of measure zero, and this continuity is used in some proofs below.

Denote $\Omega_t=\Omega\times[0,\infty)$ and consider the space of functions $B(\Omega_t)$ with the norm
$$
\|{y}\|_B= \max_{t\geq 0}\left\{\|{y}(\textbf{x},t)\|_{W_2^k}+\|\frac{\partial {y}}{\partial t}(\textbf{x},t)\|_{W_2^k}\right\}.
$$

Hereinafter we denote $S_n(\Omega_t)$ the set of non-negative vector-functions $\Vect{u}(\textbf{x},t)$, $u_i(\textbf{x},t)\in B(\Omega_t),\,i=1,\ldots,n,$ which satisfy \eqref{eq2:3}, and we use the notation $\Int S_n(\Omega_t)$ for the set of functions $\Vect{u}(x,t)\in S_n(\Omega_t)$ for which $u_i(\textbf{x},t)>0,\,i=1,\ldots,n$.

We consider weak solutions to \eqref{eq2:1}, i.e., the solutions should satisfy the integral identity
$$
\int\limits_0^\infty\int\limits_\Omega\frac{\partial u_i}{\partial t}\eta\,d\textbf{x}dt=\int\limits_0^\infty\int\limits_\Omega u_i\left[(\Vect{Au})_i-{f}^{sp}(t)\right]\eta\,d\textbf{x}dt-d_i\int\limits_0^\infty\int\limits_\Omega\langle\nabla{u_i},\nabla\eta\rangle\,d\textbf{x}dt
$$
for any function $\eta=\eta(\textbf{x},t)$ on compact support, which is differentiable on $[0,\infty)$ with respect to $t$ and belongs to the Sobolev space $W_2^1(\Omega)$ for any fixed $t>0$.

Remark that generally system \eqref{eq2:1} is not a ``system of differential equations'' because its right-hand side contains functional \eqref{eq2:4}.

The steady state solutions to \eqref{eq2:1} can be found as the solutions to the following elliptic problem
\begin{equation}\label{eq2:5}
    d_i\Delta w_i+w_i\left[(\Vect{Aw})_i-{f^{sp}}\right]=0,\quad i=1,\ldots,n
\end{equation}
with the boundary conditions $\partial_\Vect{n}w_i(\textbf{x})=0$ on $\Gamma$. Here $w_i(\textbf{x})\in W_2^k(\Omega)$.

The integral invariant \eqref{eq2:3} now reads
\begin{equation}\label{eq2:6}
    \sum_{i=1}^n\int_{\Omega}w_i(\textbf{x})d\textbf{x}=1.
\end{equation}
The set of all non-negative vector-functions $\Vect{w}(\textbf{x})=(w_i(\textbf{x}),\ldots,w_n(\textbf{x})),\,w_i(\textbf{x})\in W_2^k(\Omega),\,i=1,\ldots,n,$ that satisfy \eqref{eq2:6} is denoted $S_n(\Omega)$. Using \eqref{eq2:6} and \eqref{eq2:5} we obtain that
\begin{equation}\label{eq2:7}
    f^{sp}=\int_\Omega\langle\Vect{Aw},\,\Vect{w}\rangle\,d\textbf{x},
\end{equation}
i.e., $f^{sp}$ is a constant. The rest points of \eqref{eq1:1} are spatially homogeneous solutions to \eqref{eq2:5}; the converse also holds: any spatially homogeneous solution to \eqref{eq2:5} is a rest point of \eqref{eq1:1}. In \cite{bratus2006ssc,bratus2009existence} we proved that for sufficiently small values of the diffusion coefficients $d_i$ there exist non-homogeneous solutions to \eqref{eq2:5} in the case $(\Vect{Au})_i=k_iu_i$ and $(\Vect{Au})_i=k_iu_{i-1}$ for arbitrary positive constants $k_i$ (these are autocatalytic and hypercyclic systems, respectively, for more details see \cite{bratus2006ssc,bratus2009existence}).

Here we undertake a task to investigate the stability of spatially homogeneous solutions to \eqref{eq2:1}, which can differ from the stability of the rest points of the local system \eqref{eq1:1} due to the explicit space in the model. We also consider the conditions necessary for spatially non-homogeneous solutions to appear. And yet another purpose is to transfer the definitions of the Nash equilibria and ESS for the distributed system \eqref{eq2:1}-\eqref{eq2:3} and apply these concepts to study the asymptotic behavior of \eqref{eq2:1}.

\section{Stability of spatially homogeneous equilibria}\label{sec3}

We start with the standard definition that is  used extensively throughout the text.

\begin{definition}\label{def3:1}A stationary solution $\Vect{\hat{w}}(\Vect{x})\in S_n(\Omega)$ to \eqref{eq2:5} is Lyapunov stable if for any $\varepsilon>0$ there exists a neighborhood
$$
U^\delta=\bigl\{\Vect{w}(\Vect{x})\in S_n(\Omega) \colon \sum_{i=1}^n\|{\hat{w}_i}(\Vect{x})-{w}_i(\Vect{x})\|^2_{W_2^k(
\Omega)}<\delta^2\bigr\}
$$
of $\Vect{\hat{w}}(\Vect{x})$ that for any initial data of \eqref{eq2:1} from $U^\delta$ it follows that
\begin{equation}\label{eq3:1}
    \sum_{i=1}^n\|u_i(\Vect{x},t)-\hat{w}_i(\Vect{x})\|^2_{B(\Omega_t)}\leq \varepsilon ^2
\end{equation}
for any $t\geq 0$.

If in \eqref{eq3:1} the left hand side tends to zero, then $\Vect{\hat{w}}(\Vect{x})$ is asymptotically stable.
\end{definition}

In Definition \ref{def3:1} $u_i(\textbf{x},t),\,i=1,\ldots,n$, are the corresponding solutions to \eqref{eq2:1} with the initial data $w_i(\textbf{x})\in U^\delta,\,i=1,\ldots,n$.

Consider the following eigenvalue problem
\begin{equation}\label{eq3:2}
    \Delta \psi(\Vect{x})+\lambda \psi(\Vect{x})=0,\quad \Vect{x}\in\Omega,\quad \partial_\Vect{n}\psi|_{\textbf{x}\in\Gamma}=0.
\end{equation}
The eigenfunction system of \eqref{eq3:2} is given by $\psi_0(\textbf{x})=1,\,\{\psi_i(\textbf{x})\}_{i=1}^\infty$ and forms a complete system in the Sobolev space $W_2^2(\Omega)$ (e.g., \cite{mikhlin1964vmm}), additionally
\begin{equation}\label{eq3:3}
    \langle \psi_i(\textbf{x}),\,\psi_j(\textbf{x})\rangle=\int_\Omega \psi_i(\textbf{x})\psi_j(\textbf{x})\,d\textbf{x}=\delta_{ij},
\end{equation}
where $\delta_{ij}$ is the Kronecker symbol. The corresponding eigenvalues satisfy the condition
$$
0=\lambda_0<\lambda_1\leq \lambda_2\leq\ldots\leq\lambda_i\leq\ldots,\qquad \lim_{i\to\infty}\lambda_i=+\infty.
$$

The following theorem gives a necessary condition for a spatially homogeneous solution to \eqref{eq2:1} be asymptotically stable, if the corresponding rest point of \eqref{eq1:1} is asymptotically stable.

\begin{theorem}\label{th3:1} Let $\Vect{\hat{v}}\in\Int S_n$ be an asymptotically stable rest point of \eqref{eq1:1}. Then for this point to be an asymptotically stable homogeneous stationary solution to \eqref{eq2:1} it is necessary that
\begin{equation}\label{eq3:4}
    \sum_{i=1}^n d_i> \frac{\beta}{\lambda_1}\,,\quad \beta=\langle\Vect{A\hat{v}},\,\Vect{\hat{v}}\rangle,
\end{equation}
where $\lambda_1$ is the first non-zero eigenvalue of \eqref{eq3:2}.
\end{theorem}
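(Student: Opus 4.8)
The plan is to linearize \eqref{eq2:1} about the constant state $\Vect{u}\equiv\Vect{\hat v}$, to diagonalize the spatial operator along the Neumann eigenbasis of \eqref{eq3:2}, and thereby reduce asymptotic stability to the Hurwitz property of a family of $n\times n$ matrices indexed by the $\lambda_k$. Writing $u_i=\hat v_i+w_i$, the invariant \eqref{eq2:3} forces $\sum_{i=1}^n\int_\Omega w_i\,d\Vect{x}=0$; since $(\Vect{A}\Vect{\hat v})_i=\beta$ and $f^{sp}=\beta$ at the homogeneous state, the $\beta w_i$ contributions coming from $u_i(\Vect{Au})_i$ and from $u_if^{sp}$ cancel, and the linearization is
\[
\partial_t w_i=\hat v_i\Big[(\Vect{Aw})_i-\int_\Omega\langle\Vect{Aw},\Vect{\hat v}\rangle\,d\Vect{x}\Big]+d_i\Delta w_i,\qquad i=1,\dots,n.
\]

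Next I would expand $w_i(\Vect{x},t)=\sum_{k\ge 0}c_i^{(k)}(t)\psi_k(\Vect{x})$ in the orthonormal system \eqref{eq3:3}. Because $\int_\Omega\psi_k\,d\Vect{x}=\langle\psi_k,\psi_0\rangle=\delta_{k0}$, the nonlocal integral only couples to the mode $k=0$, while $\Delta\psi_k=-\lambda_k\psi_k$ turns diffusion into the diagonal factor $-\lambda_k d_i$. Projecting onto $\psi_k$ gives, for every $k\ge 1$, the autonomous system $\dot{\Vect{c}}^{(k)}=\Vect{M}_k\Vect{c}^{(k)}$ with $\Vect{M}_k=\Vect{\hat V}\Vect{A}-\lambda_k\Vect{D}$, where $\Vect{\hat V}=\mathrm{diag}(\hat v_1,\dots,\hat v_n)$ and $\Vect{D}=\mathrm{diag}(d_1,\dots,d_n)$; the mode $k=0$ reproduces the Jacobian of \eqref{eq1:1} on the simplex tangent space $\{\sum_i\xi_i=0\}$, asymptotically stable by hypothesis. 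Thus asymptotic stability of the homogeneous solution requires every $\Vect{M}_k$, $k\ge 1$, to be Hurwitz, and as $\lambda_1\le\lambda_k$ the binding requirement is that $\Vect{M}_1$ be Hurwitz.

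The mechanism producing \eqref{eq3:4} is that $\Vect{\hat v}>0$ is a right eigenvector of $\Vect{\hat V}\Vect{A}$: the rest-point identity $\Vect{A}\Vect{\hat v}=\beta\Vect{1}_n$ yields $\Vect{\hat V}\Vect{A}\,\Vect{\hat v}=\beta\Vect{\hat v}$. Expressing $\Vect{\hat V}\Vect{A}$ in a basis whose first vector is $\Vect{\hat v}$ shows its remaining eigenvalues to be precisely those of the reduced Jacobian of \eqref{eq1:1}, so asymptotic stability of $\Vect{\hat v}$ for the ODE means every eigenvalue of $\Vect{\hat V}\Vect{A}$ other than $\beta$ has negative real part; hence $\beta$ is the dominant eigenvalue (the interesting case being $\beta>0$, as $\beta\le 0$ makes \eqref{eq3:4} automatic). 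To read off the inequality I would test $\Vect{M}_1$ on $\Vect{\hat v}$ in the weighted inner product $\langle\Vect{x},\Vect{y}\rangle_{*}=\sum_i x_iy_i/\hat v_i$, for which $\langle\Vect{\hat V}\Vect{A}\,\Vect{\hat v},\Vect{\hat v}\rangle_{*}=\beta$, $\langle\Vect{\hat v},\Vect{\hat v}\rangle_{*}=1$ and $\langle\Vect{D}\Vect{\hat v},\Vect{\hat v}\rangle_{*}=\sum_i d_i\hat v_i$, so a stable $\Vect{M}_1$ gives $\beta\le\lambda_1\sum_i d_i\hat v_i$. As $0<\hat v_i<1$ and $\sum_i\hat v_i=1$ imply $\sum_i d_i\hat v_i<\sum_i d_i$, the bound \eqref{eq3:4} follows.

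The step I expect to be the crux is passing from ``$\Vect{M}_1$ is Hurwitz'' to ``$\langle\Vect{M}_1\Vect{\hat v},\Vect{\hat v}\rangle_{*}\le 0$''. For symmetric $\Vect{A}$ the matrix $\Vect{\hat V}\Vect{A}$ is self-adjoint for $\langle\cdot,\cdot\rangle_{*}$, its eigenvalues are real, the spectral abscissa of $\Vect{M}_1$ dominates the $*$-Rayleigh quotient at $\Vect{\hat v}$, and the inequality---hence \eqref{eq3:4}---is immediate. For non-symmetric $\Vect{A}$, however, $\Vect{M}_1$ is not $*$-normal: a positive $*$-numerical value need not correspond to an eigenvalue with positive real part, so stability alone does not force the Rayleigh quotient to be nonpositive. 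Removing this gap is where the dominant-eigenvalue property must be used: one tracks the eigenvalue branch $\mu(s)$ of $\Vect{\hat V}\Vect{A}-s\Vect{D}$ issuing from $\beta$ at $s=0$ and argues that its real part can become negative, as $s$ grows to $\lambda_1$, only after $s\sum_i d_i$ surpasses $\beta$; controlling the generally complex quotient $\Vect{l}(s)^{\top}\Vect{D}\,\Vect{\hat v}(s)/\Vect{l}(s)^{\top}\Vect{\hat v}(s)$ that governs $\mu'(s)$ along this branch is the technical heart of the argument.
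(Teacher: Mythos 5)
Your reduction coincides step for step with the paper's: the same perturbation of the homogeneous state, the same expansion in the Neumann eigenbasis of \eqref{eq3:2}, the same observation that the zero mode reproduces the linearization \eqref{eq3:8} of \eqref{eq1:1}, and the same mode matrices --- your $\Vect{\hat V}\Vect{A}-\lambda_k\Vect{D}$ is exactly the paper's matrix $\Vect{R}_k$ of system \eqref{eq3:9}. Even your spectral bookkeeping is correct: since $\Vect{Q}=\Vect{\hat V}\Vect{A}-\Vect{\hat v}(\Vect{A}^\tau\Vect{\hat v})^\tau$ is a rank-one correction mapping into the span of the common eigenvector $\Vect{\hat v}$, the spectrum of $\Vect{\hat V}\Vect{A}$ is indeed $\beta$ together with the eigenvalues of the reduced Jacobian. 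The gap is exactly where you place it, and as the argument stands it is fatal: nothing lets you pass from ``$\Vect{R}_1$ is Hurwitz'' to nonpositivity of the weighted Rayleigh quotient at $\Vect{\hat v}$. For a nonnormal matrix the numerical range (in your $*$-inner product or any other) can protrude into the right half-plane while the spectrum stays in the left one, and the branch-tracking you sketch is not a repair but a restatement of the problem: $\mu'(s)$ is governed by the generally complex quantity $\Vect{l}(s)^{\tau}\Vect{D}\Vect{\hat v}(s)/\Vect{l}(s)^{\tau}\Vect{\hat v}(s)$, for which you have no positivity or monotonicity, and you give no reason the branch issuing from $\beta$ cannot cross the imaginary axis well before $s\sum_i d_i$ reaches $\beta$. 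So outside the symmetric case (where your $*$-self-adjointness observation does close the argument) the proposal proves nothing.

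The paper closes by the one spectral functional that Hurwitzness does control and that you never invoke: the trace. Stability of every mode forces $\Tr \Vect{R}_k<0$ for all $k\geq 1$ automatically, because the trace is the sum of the eigenvalues; the identity $\Tr\Vect{R}_k=\Tr\Vect{Q}+\beta-\lambda_k\sum_i d_i$, combined with $\Tr\Vect{Q}<0$ (a consequence of the assumed asymptotic stability of the rest point of \eqref{eq1:1}, i.e.\ of the zero-mode system \eqref{eq3:8}), ties the mode matrices to $\beta$ and yields \eqref{eq3:4} in one line. (Strictly read, the paper's closing sentence runs the inequality in the direction ``\eqref{eq3:4} implies $\Tr\Vect{R}_k<0$ for every $k$''; what stability alone extracts from the trace is $\lambda_1\sum_i d_i>\Tr(\Vect{\hat V}\Vect{A})=\beta+\Tr\Vect{Q}$.) This also shows why your gap cannot be patched by importing the paper's device: your target inequality $\beta\leq\lambda_1\sum_i d_i\hat v_i$ is strictly stronger than anything the trace computation can deliver, since $\Tr(\Vect{\hat V}\Vect{A})<\beta$. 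You must either weaken your claim to \eqref{eq3:4} and argue via traces, as the paper does, or genuinely carry out the eigenvalue-branch analysis you only gesture at.
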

\begin{proof}
Remark that if $\Vect{\hat{v}}$ is Lyapunov stable then $\Vect{\hat{v}}$ is a Nash equilibrium of the game with the payoff matrix $\Vect{A}$. Consider the solutions to \eqref{eq2:1}--\eqref{eq2:3} assuming that the Cauchy data are perturbed:
$$
u_i(\textbf{x},0)=\varphi_i(\textbf{x})=\hat{v}_i+w_i^0(\textbf{x}),\quad i=1,\ldots,n,
$$
where $w_i^0(\textbf{x})\in W_2^k(\Omega)$ such that $\sum_{i=1}^n\|w_i^0(\textbf{x})\|^2_{W_2^k(\Omega)}\leq \delta^2,\,\delta >0$.

Let us look for a solution to \eqref{eq2:1}--\eqref{eq2:3} in the form
\begin{equation}\label{eq3:5}
    u_i(\textbf{x},t)=\hat{v}_i+w_i(\textbf{x},t),\quad w_i(\textbf{x},t)=c_0^i(t)+\sum_{k=1}^\infty c_k^i(t)\psi_k(\textbf{x}),\quad i=1,\ldots,n,
\end{equation}
which is always possible since the eigenfunctions $\psi_i(\Vect{x})$ of \eqref{eq3:2} form a complete system in $W_2^k(\Omega)$. $c_k^i(t),\,k=1,2,3,\ldots,i=1,\ldots,n$ are smooth functions of $t$. Note that $w_i^0(\textbf{x})=w_i(\textbf{x},0),\,i=1,\ldots,n$. The spatially homogeneous equilibrium $\Vect{\hat{v}}$ is stable if $c_k^i(t)\to0,\,t\to\infty$ for all $i$ and~$k$.

Due to the fact that
$$
\sum_{i=1}^n \int_\Omega u_i(\textbf{x},t)\,dx=\sum_{i=1}^n \hat{v}_i=1,
$$
we have from \eqref{eq3:2} that
\begin{equation}\label{eq3:6}
    \sum_{i=1}^nc_0^i(t)=0.
\end{equation}
Substituting \eqref{eq3:5} into \eqref{eq2:1} and retaining in the usual way only linear terms with respect to $w_i$ we obtain the following equations:
\begin{equation}\label{eq3:7}
    \frac{dc_0^i(t)}{dt}=\hat{v}_i\left[(\Vect{A{c}_0})_i-\langle \Vect{A^\tau\hat{v},\,\Vect{{c}_0}}\rangle\right]-\hat{v}_i\langle \Vect{A\hat{v}},\,\Vect{{c}_0}\rangle+c_0^i(t)\left[(\Vect{A\hat{v}})_i-\langle \Vect{A\hat{v},\,\Vect{\hat{v}}}\rangle\right],
\end{equation}
for $i=1,\ldots,n$. Here $\Vect{{c}_0}=(c_0^1(t),\ldots,c_0^n(t))$, and $\tau$ denotes transposition. From \eqref{eq1:2} it follows that the last term in \eqref{eq3:7} is zero. From \eqref{eq1:3} and \eqref{eq3:6} we also have that
$$
\langle\Vect{A\hat{v}},\,\Vect{c_0}\rangle=\sum_{i=1}^n c_0^i(t)(\Vect{A\hat{v}})_i=\beta\sum_{i=1}^nc_0^i(t)=0.
$$
Therefore, we obtain that
\begin{equation}\label{eq3:8}
    \frac{dc_0^i(t)}{dt}=\hat{v}_i\left[(\Vect{A{c}_0})_i-\langle \Vect{A^\tau\hat{v},\,\Vect{{c}_0}}\rangle\right],\quad i=1,\ldots,n.
\end{equation}

System \eqref{eq3:8} is linear, with the matrix $\textbf{Q}=\|q_{ij}\|_{i,j=1,\ldots,n}$ with the elements
\begin{equation}\label{eq3:8a}
q_{ij}=a_{ij}\hat{v}_i-(\Vect{A^\tau\hat{v}})_i\hat{v}_i,\quad i,j=1,\ldots,n,
\end{equation}
which coincides with the Jacobi matrix of \eqref{eq1:1} at the rest point $\Vect{\hat{v}}$. Due to the assumptions the rest point of \eqref{eq1:1} is asymptotically stable, therefore the trivial stationary point of \eqref{eq3:8} is also asymptotically stable.

Multiplying equations \eqref{eq2:1} consequently by $\psi_k(\textbf{x}),\,k=1,2,\ldots$ and retaining only linear terms, after substituting \eqref{eq3:5} into \eqref{eq2:1}, we obtain
\begin{equation}\label{eq3:9}
    \frac{dc_k^i(t)}{dt}=\hat{v}_i(\Vect{Ac_k})_i-\lambda_kd_ic_k^i(t),\quad k=1,2,\ldots,\quad i=1,\ldots,n,
\end{equation}
where $\Vect{c_k}=(c_k^1(t),\ldots,c_k^n(t)),\,\lambda_k$ are the eigenvalues of \eqref{eq3:2}. Linear system \eqref{eq3:9} has the matrix $\textbf{R}_k$ with the elements
$$
r_{ij}^k=a_{ij}\hat{v}_i-\lambda_kd_i\delta_{ij},
$$
where $\delta_{ij}$ is the Kronecker symbol. From \eqref{eq3:8a} and the last expressions we have that
$$
r_{ij}^k=q_{ij}+(\Vect{A^\tau\hat{v}})_i\hat{v}_i-\lambda_kd_i\delta_{ij}.
$$
System \eqref{eq3:8} is stable, hence $\Tr \textbf{Q}=\sum_{i=1}^nq_{ii}<0$. On the other hand
$$
\Tr \textbf{R}_k=\Tr \textbf{Q}+\langle \Vect{A\hat{v}},\,\Vect{\hat{v}}\rangle-\lambda_k\sum_{i=1}^nd_i<\beta-\lambda_1\sum_{i=1}^nd_i,
$$
therefore $\Tr \textbf{R}_k$ is negative if \eqref{eq3:4} holds, which completes the proof.
\end{proof}

\begin{remark}Theorem \ref{th3:1} gives only a necessary condition for the homogeneous rest point to be asymptotically stable. It is possible to specify the necessary and sufficient conditions for a particular case $n=2$. In this case we have
$$
\Vect{A}=\begin{pmatrix}
           a_{11} & a_{12} \\
           a_{21} & a_{22} \\
         \end{pmatrix}.
$$
System \eqref{eq1:1} has an asymptotically stable equilibrium $\Vect{\hat{v}}=(\hat{v}_1,\,\hat{v}_2)\in \Int S_2$ if and only if
$$
a_{11}<a_{21},\quad a_{22}<a_{12}.
$$
Denoting $c_1=a_{11}-a_{21}<0,\,c_2=a_{22}-a_{12}<0,\,\Delta=a_{11}a_{22}-a_{12}a_{21}$, we have
$$
\hat{v}_1=\frac{c_2}{c_1+c_2}\,,\quad \hat{v}_2=\frac{c_1}{c_1+c_2}\,.
$$
System \eqref{eq3:9} has the matrix
$$
\Vect{R}_k=\begin{pmatrix}
           a_{11}\hat{v}_1-\lambda_kd_1 & a_{12}\hat{v}_1 \\
           a_{21}\hat{v}_2 & a_{22}\hat{v}_2-\lambda_kd_2 \\
         \end{pmatrix}.
$$
For the equilibria to be stable we need first that
$
\Tr \Vect{R}_k<0,
$
which is equivalent to
\begin{equation}\label{eqN:1}
    \sum_{i=1}^2d_i>\frac{1}{\lambda_1}\left(\beta-\frac{c_1c_2}{c_1+c_2}\right).
\end{equation}
Condition \eqref{eqN:1} is a generalization of \eqref{eq3:4} since $c_1c_2/(c_1+c_2)<0$.

The second condition is $\det \Vect{R}_k>0$, or
\begin{equation}\label{eqN:2}
    d_1d_2\lambda_k^2-\lambda_k(d_1a_{22}\hat{v}_2+d_2a_{11}\hat{v}_1)+\Delta\hat{v}_1\hat{v}_2>0.
\end{equation}
If $(d_1a_{22}\hat{v}_2+d_2a_{11}\hat{v}_1)^2>4d_1d_2\Delta\hat{v}_1\hat{v}_2$ then \eqref{eqN:2} is satisfied. The last inequality is equivalent to
\begin{equation}\label{eqN:3}
    \Delta>0,\quad a_{12}a_{21}\leq 0.
\end{equation}

In the case $\Delta<0$ the condition for the replicator equation \eqref{eq2:1} to have asymptotically stable equilibrium is that the largest root of the quadratic equation
$$
d_1d_2\lambda^2-\lambda(d_1a_{22}\hat{v}_2+d_2a_{11}\hat{v}_1)+\Delta\hat{v}_1\hat{v}_2=0
$$
should satisfy the condition
\begin{equation}\label{eqN:4}
    \lambda^*<\lambda_1.
\end{equation}
Therefore the necessary and sufficient conditions for the homogeneous rest point $\Vect{\hat{u}}\in \Int S_2(\Omega)$ be asymptotically stable are the conditions \eqref{eqN:1} and \eqref{eqN:3} or the conditions \eqref{eqN:1} and \eqref{eqN:4}.

Consider for example the simplest hypercycle equation with
$$
\Vect{A}=\begin{pmatrix}
           0 & a_{12} \\
           a_{21} & 0 \\
         \end{pmatrix}.
$$
Here we have that condition \eqref{eqN:1} is always satisfied. The condition $\det \Vect{R}_k>0$ can be written as
$$
d_1d_2\lambda^2_k>a_{12}a_{21}\hat{v}_1\hat{v}_2.
$$
If this condition does not hold then, as was shown in \cite{bratus2009existence} for the general $n$-dimensional case, the inner rest point becomes unstable.

\end{remark}

If $\Omega$ is one-dimensional, then for the cases $(\Vect{Au})_i=k_iu_i$ and $(\Vect{Au})_i=k_iu_{i-1}$ there exist non-uniform steady state solutions to \eqref{eq2:5}--\eqref{eq2:7} (see \cite{bnp2010,bratus2009existence}). More precisely, a necessary condition for such a solution to exist is $$d=\sum_{i=1}^n\frac{d_i}{k_i}<\frac{1}{\pi^2}\,.$$ We note that $\pi^2$ is the first non-zero eigenvalue of \eqref{eq3:2} when $\Omega=[0,1]$. Remark that this condition is a particular case of the general condition \eqref{eq3:4}, and we conjecture that the same situation occurs in the general case.

The natural question is whether non-uniform stationary solutions appear in the general case~\eqref{eq2:1}. Here we show that, at least for some matrices $\Vect{A}$, system \eqref{eq2:1} possesses non-uniform steady state solutions.

We rewrite \eqref{eq2:5} in the form
\begin{equation}\label{eq3:10}
    \begin{split}
      \frac{dw_i}{dx} &=p_i, \\
       d_i\frac{dp_i}{dx} & =-w_i\left[(\Vect{Aw})_i-f^{sp}\right],
    \end{split}
\end{equation}
for any $i=1,\ldots,n$. The initial data are
\begin{equation}\label{eq3:11}
    p_i(0)=w_i'(0)=0.
\end{equation}
We assume that matrix $\Vect{A}$ is such that $\Vect{Aw}>0$ for any $\Vect{w}>0$.

System \eqref{eq3:10} is conservative, its rest points $(\Vect{\hat{w}},\,\Vect{\hat{p}})$ can be found from
$$
\hat{p}_i=0,\quad (\Vect{A\hat{w}})_i=\beta,\quad \beta=\langle\Vect{A\hat{w}},\,\Vect{\hat{w}}\rangle,\quad i=1,\ldots,n,
$$
therefore $\Vect{\hat{w}}$ is a rest point of \eqref{eq1:1}. Consider the Jacobi matrix of \eqref{eq3:10} evaluated at the rest point $(0,\Vect{\hat{w}})$:
$$\textbf{J}=\left(
  \begin{array}{cc}
    \Vect{0} & \Vect{I} \\
    -\textbf{J}_d^l & \Vect{0} \\
  \end{array}
\right),
$$
where $\Vect{0}$ is the $n\times n$ zero matrix, $\Vect{I}$ is the $n$-dimensional identity matrix, and $\textbf{J}_d^l$ is the Jacobi matrix of \eqref{eq1:1} at the rest point $\Vect{\hat{v}}=\Vect{\hat{w}}$ when $i$-th row divided by $d_i$ for each row. The eigenvalues of \textbf{J} are the roots of the equation
$$
\lambda^{2n}+\det \textbf{J}_d^l=0,\quad \det \textbf{J}_d^l=(d_1\cdot\ldots\cdot d_n)^{-1}\det \textbf{J}^l,
$$
where $\det \textbf{J}^l$ is the determinant of the Jacobi matrix of \eqref{eq1:1} at $\Vect{\hat{v}}$. if $\det \textbf{J}^l<0$ then for any $n\geq 2$ the last equation has pure imaginary roots. If $\det \textbf{J}^l>0$ then the same is possible when $n=2k+1,\,k=1,2,\ldots$

Let us introduce the following sets in the phase space:
\begin{equation*}
    \begin{split}
       \Sigma & =\{\Vect{p}\in \mathbb R^n\colon p_i=0,\,i=1,\ldots,n\}, \\
        U^-&=\{\Vect{w}\in S_n\colon (\Vect{Aw})_i-{f^{sp}}<0,\,i=1,\ldots,n\},\\
        U^+&=\{\Vect{w}\in S_n\colon (\Vect{Aw})_i-{f^{sp}}>0,\,i=1,\ldots,n\},\\
        \Pi &=\{\Vect{w}\in S_n\colon w_i=\hat{v}_i,\,i=1,\ldots,n\}.
     \end{split}
\end{equation*}

\begin{figure}
\centering
\includegraphics[width=0.5\textwidth]{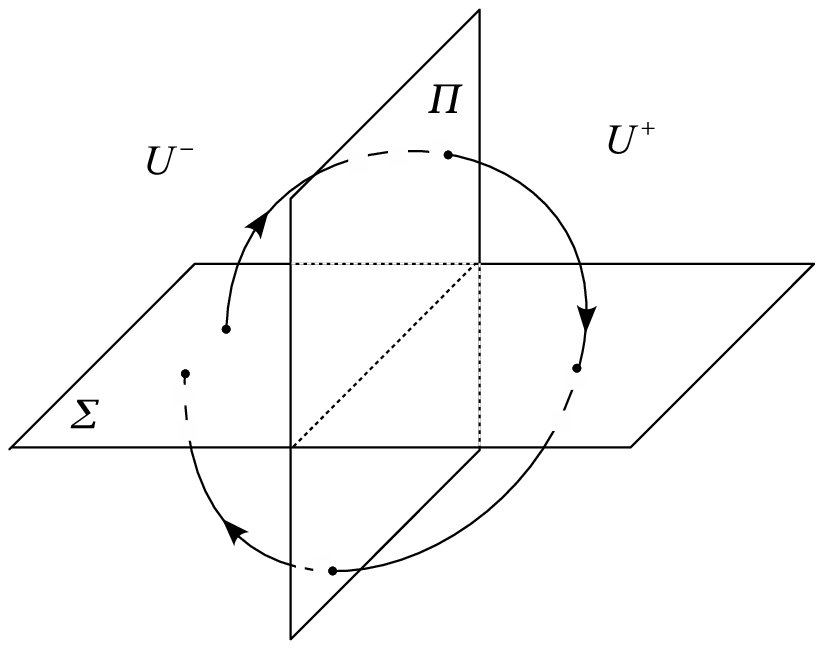}
\caption{}\label{fig3:1}
\end{figure}

From \eqref{eq3:11} it follows that at the initial ``time'' the orbit of the system \eqref{eq3:10} belongs to $\Sigma$. Suppose that $\Vect{w}(0)\in U^-$. Then the second equation in \eqref{eq3:10} yields that functions $p_i(x)$ increase as $x$ increases, and therefore $p_i(x)>0,\,x>0,\,i=1,\ldots,n$. From the condition on $\Vect{A}$ it follows that $(\Vect{Aw}(0))_i<(\Vect{Aw}(x))_i,\,x>0$ and therefore the values $(\Vect{Aw}(x))_i-{f^{sp}}$ decrease as $x$ increases, hence $\Vect{w}(x)$ has to cross the hyperplane $\Pi$ where $(\Vect{Aw}(x))_i={f^{sp}},\,i=1,\ldots,n$. After that, the solution gets into the set $U^+$, which implies that $p_i(x),\,i=1,\ldots,n$ are decreasing and $\Vect{w}(x)$ are increasing. Therefore there exists such value $x_i^*$ such that $p_i(x_i^*)=w'_i(x_i^*)=0,\,i=1,\ldots,n$ (see Fig. \ref{fig3:1}).

The diffusion coefficients $d_i$ characterize $(n+i)-$th component of the movement speed along the phase trajectories. An increase (decrease) in values $d_i$ corresponds to the decrease (increase, respectively) in the movement speed of $(n+i)-$th component of the speed vector. Therefore, it is possible to find such values of $d_i,\,i=1,\ldots,n$ such that all $x_i^*=1$.

We remark that if we reduce values $d_i$ twice this would mean that the phase orbit again would reach $\Pi$ (one cycle). Reducing $d_i$ four times, we obtain the orbit that makes two cycles in the state space, and so on. Therefore, from the discussion above, it follows that system \eqref{eq2:5} can have non-uniform stationary solutions which may possess an arbitrary number of oscillations (see a similar discussion in \cite{bratus2009existence}, where some examples are given).

\section{Dynamics of the distributed replicator equation}\label{sec4}
\begin{definition}\label{def4:1}
We shall say that the vector function $\Vect{\hat{w}}(\Vect{x})\in S_n(\Omega)$ is a distributed Nash equilibrium if
\begin{equation}\label{eq4:1}
    \int_{\Omega} \langle \Vect{u}(\Vect{x},t),\,\Vect{A\hat{w}}(\Vect{x})\rangle\, d\Vect{x}\leq \int_{\Omega}\langle\Vect{\hat{w}}(\Vect{x}),\,\Vect{A\hat{w}}(\Vect{x})\rangle \,d\Vect{x}
\end{equation}
for any vector-function $\Vect{u}(\Vect{x},t)\in S_n(\Omega_t),\,\Vect{u}(\Vect{x},t)\neq \Vect{\hat{w}}(\Vect{x})$.
\end{definition}

\begin{remark} Let $\Vect{\hat{w}}\in S_n$ satisfy \eqref{eq4:1}. Then $\Vect{\hat{w}}$ is a Nash equilibrium in the game with payoff $\Vect{A}$. Indeed,
$$
 \int_{\Omega} \langle \Vect{u}(\textbf{x},t),\,\Vect{A\hat{w}}\rangle\, d\textbf{x}=\langle \Vect{\bar{u}}(t),\,\Vect{A\hat{w}}\rangle,\quad \bar{u}_i(t)=\int_{\Omega} u_i(\textbf{x},t)\,d\textbf{x}.
$$ \eqref{eq2:3} implies that $\Vect{\bar{u}}(t)\in S_n$ for any $t$. Therefore we have
$$\langle \Vect{\bar{u}},\,\Vect{A\hat{w}}\rangle\leq \langle \Vect{A\hat{w}},\,\Vect{\hat{w}}\rangle,
$$
which means that $\Vect{\hat{w}}$ is a Nash equilibrium and therefore a rest point of \eqref{eq1:1}.

\end{remark}

\begin{theorem}
If $\Vect{\hat{w}(x)}\in\Int S_n(\Omega)$ is a Lyapunov stable stationary solution to \eqref{eq2:1} then $\Vect{\hat{w}}(\Vect{x})$ is a distributed Nash equilibrium.
\end{theorem}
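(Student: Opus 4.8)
The plan is to argue by contradiction, and the first move is to recast \eqref{eq4:1} in a more usable form. Because its left-hand side is linear in $\Vect{u}$ and is maximized over $S_n(\Omega)$ by concentrating the unit mass where the components of $\Vect{A\hat{w}}(\Vect{x})$ are largest, the inequality \eqref{eq4:1} is equivalent to the pointwise bound $(\Vect{A\hat{w}}(\Vect{x}))_i\le f^{sp}$ for a.e.\ $\Vect{x}$ and all $i$, where $f^{sp}=\int_\Omega\langle\Vect{\hat{w}},\Vect{A\hat{w}}\rangle\,d\Vect{x}$ by \eqref{eq2:7}. So I would suppose $\Vect{\hat{w}}\in\Int S_n(\Omega)$ is Lyapunov stable while $(\Vect{A\hat{w}})_k>f^{sp}$ on a subdomain of positive measure for some $k$, and pick $\Vect{z}(\Vect{x})\in S_n(\Omega)$ with $\eta:=\int_\Omega\langle\Vect{z},\Vect{A\hat{w}}\rangle\,d\Vect{x}-f^{sp}>0$. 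A first structural observation: a spatially constant $\Vect{z}$ can never achieve $\eta>0$, since multiplying \eqref{eq2:5} by $z_i/\hat{w}_i$, integrating and using \eqref{eq2:2} gives, for constant $\Vect{z}$, the identity $\eta=-\sum_i d_i z_i\int_\Omega|\nabla\hat{w}_i|^2/\hat{w}_i^2\,d\Vect{x}\le 0$. Hence a violating $\Vect{z}$ must be genuinely inhomogeneous — the feature that distinguishes this from the classical finite-dimensional statement that a Lyapunov-stable rest point is a Nash equilibrium.

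The functional I would track is the spatial analogue of the cross-entropy Lyapunov function of \eqref{eq1:1}, namely $\tilde V(t)=\int_\Omega\sum_i z_i(\Vect{x})\ln u_i(\Vect{x},t)\,d\Vect{x}$, which is well defined near the interior state $\Vect{\hat{w}}>0$ because positivity of the $u_i$ is preserved by \eqref{eq2:1}. Differentiating along a solution, using \eqref{eq2:1} and $\sum_i\int_\Omega z_i\,d\Vect{x}=1$, gives
$$\frac{d\tilde V}{dt}=\Big(\int_\Omega\langle\Vect{z},\Vect{Au}\rangle\,d\Vect{x}-f^{sp}(t)\Big)+\sum_{i=1}^n d_i\int_\Omega z_i\frac{\Delta u_i}{u_i}\,d\Vect{x}.$$
As $\Vect{u}\to\Vect{\hat{w}}$ in the $W_2^k$-norm, the embedding theorem yields uniform convergence, so the reaction bracket tends to $\int_\Omega\langle\Vect{z},\Vect{A\hat{w}}\rangle\,d\Vect{x}-f^{sp}=\eta>0$, exactly the positive ``better-reply advantage'' that drives the classical proof.

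The step I expect to be the main obstacle is the diffusion term, and here the distributed problem departs sharply from the ODE case. Integrating by parts with \eqref{eq2:2} and completing the square bounds that term below by $-\sum_i(d_i/4)\int_\Omega|\nabla z_i|^2/z_i\,d\Vect{x}=:-C(\Vect{z})$; yet evaluating the whole identity at $\Vect{u}=\Vect{\hat{w}}$ and invoking \eqref{eq2:5} shows the diffusion term there equals precisely $-\eta$, so that $\eta\le C(\Vect{z})$ for every admissible $\Vect{z}$. In other words, stationarity forces the reaction advantage to be cancelled to leading order by the diffusive loss: $d\tilde V/dt\to 0$ at $\Vect{\hat{w}}$, and no choice of $\Vect{z}$ — however sharply localized on the violation set — makes $d\tilde V/dt$ bounded below by a positive constant. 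The classical monotone-functional mechanism therefore cannot close the argument, and this is the genuine difficulty of the statement.

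To get past this I would abandon the functional route and linearize \eqref{eq2:1} about $\Vect{\hat{w}}$: writing $u_i=\hat{w}_i+\varepsilon_i$ with $\sum_i\int_\Omega\varepsilon_i\,d\Vect{x}=0$, the perturbation obeys $\partial_t\varepsilon_i=d_i\Delta\varepsilon_i+[(\Vect{A\hat{w}})_i-f^{sp}]\varepsilon_i+\hat{w}_i(\Vect{A\varepsilon})_i-\hat{w}_i\,\delta f^{sp}$, a nonlocal, non-self-adjoint operator. The aim is to show that a pointwise violation $(\Vect{A\hat{w}}(\Vect{x}))_k>f^{sp}$ on a subdomain produces an eigenvalue with positive real part — a mode concentrated there, whose positive local growth coefficient overcomes the diffusive damping $d_k\Delta$ — which contradicts the stability estimate \eqref{eq3:1} in the $B(\Omega_t)$-norm. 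Constructing such a growing mode, i.e.\ the spectral analysis of this operator for a possibly non-homogeneous $\Vect{\hat{w}}$, is the hard part; once it is available, Lyapunov stability forces $(\Vect{A\hat{w}})_i\le f^{sp}$ a.e.\ for all $i$, which by the reformulation in the first paragraph is exactly the distributed Nash inequality \eqref{eq4:1}.
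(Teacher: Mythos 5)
Your proposal does not prove the theorem: its decisive step is explicitly deferred. Having (correctly) reformulated \eqref{eq4:1} as the pointwise condition $(\Vect{A\hat{w}}(\Vect{x}))_i\le f^{sp}$ and then abandoned the Lyapunov-functional route, you reduce the whole statement to the claim that a pointwise violation on a set of positive measure forces the nonlocal, non-self-adjoint linearization about $\Vect{\hat{w}}$ to have spectrum in the right half-plane, and you yourself label the construction of that growing mode ``the hard part.'' Nothing in the proposal produces this eigenvalue, and even granting it you would still need the (separately nontrivial for this nonlocal problem) passage from linear instability to a violation of the nonlinear stability estimate \eqref{eq3:1} of Definition~\ref{def3:1}. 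As written, the argument is a research plan with a hole exactly at the crux, so the conclusion $(\Vect{A\hat{w}})_i\le f^{sp}$ a.e.\ is never established.

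For comparison, the paper closes the argument with precisely the monotone-functional mechanism you discarded, in its simplest form: it assumes $\Vect{\hat{w}}$ is not a distributed Nash equilibrium, asserts by continuity an \emph{integrated} inequality \eqref{eq4:2} for a single fixed index $i$ on a whole neighborhood of $\Vect{\hat{w}}$, and then tracks the unweighted average $\overline{\ln u_i}(t)$. With a constant weight the diffusion term has the favorable sign $\int_\Omega \Delta u_i/u_i\,d\Vect{x}\ge 0$ by \eqref{eq4:4} (no weight $z$, hence no completing the square, is ever needed), so $\frac{d}{dt}\overline{\ln u_i}>\xi$, and Jensen's inequality \eqref{eq4:5} gives $\bar{u}_i(t)>C_0e^{\xi t}$, contradicting stability (indeed boundedness). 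Your two structural computations are correct and in fact pinpoint the delicate spot in the paper's one-line continuity claim: your identity $\int_\Omega(\Vect{A\hat{w}})_i\,d\Vect{x}-f^{sp}=-d_i\int_\Omega|\nabla\hat{w}_i|^2/\hat{w}_i^2\,d\Vect{x}\le 0$ shows that \eqref{eq4:2} cannot hold at $\Vect{u}=\Vect{\hat{w}}$ itself when $\Vect{\hat{w}}$ is genuinely non-homogeneous, so the step from ``not distributed Nash'' (a pointwise excess somewhere) to the integrated bound \eqref{eq4:2} is transparent only for spatially homogeneous $\Vect{\hat{w}}$, where pointwise and integrated violations coincide. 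But having located this difficulty, you neither repair the functional argument (your objection that $d\tilde{V}/dt$ vanishes at $\Vect{\hat{w}}$ rules out a uniform positive bound on a full neighborhood, yet Chetaev-type instability arguments need positivity only on a sector, so the route is not as dead as you conclude) nor carry out the spectral alternative; either completion would have been the actual content of the proof.
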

\begin{proof}
Due to the fact that $\Vect{\hat{w}}(\textbf{x})$ is Lyapunov stable, then, for any initial data from a neighborhood $U^\delta$ of $\Vect{\hat{w}}(\Vect{x})$ in the space $W_2^k(\Omega)$, the corresponding solution satisfies~ \eqref{eq3:1}.

Suppose that $\Vect{\hat{w}}(\Vect{x})$ is not a distributed Nash equilibrium. Then, using continuity of the scalar product, there exists an index $i$ and constant $\xi>0$ such that
\begin{equation}\label{eq4:2}
    \int_{\Omega}(\Vect{Au}(\Vect{x},t))_i\,d\Vect{x}-\int_{\Omega}\langle \Vect{Au}(\Vect{x},t),\,\Vect{u}(\Vect{x},t)\rangle\,d\textbf{x}>\xi
\end{equation}
for all $\Vect{u}(\textbf{x},t)\in\Int S_n(\Omega_t)$ in a neighborhood of $\Vect{\hat{w}}(\Vect{x})\in \Int S_n(\Omega)$.

Let $\Vect{u}(\textbf{x},t)$ be a solution to \eqref{eq2:1}. Then from \eqref{eq2:1} we obtain
\begin{equation}\label{eq4:3}
    \frac{d}{dt}\overline{\ln u_i}(t)=\int_{\Omega} \bigl[(\Vect{Au}(\Vect{x},t))_i-\langle \Vect{Au}(\Vect{x},t),\,\Vect{u}(\Vect{x},t)\rangle\bigr]\,d\Vect{x}+d_i\int_{\Omega}\frac{\Delta u_i(\Vect{x},t)}{u_i(\Vect{x},t)}\,d\Vect{x},
\end{equation}
where $\overline{\ln u_i}(t)=\int_{\Omega}\ln u_i(\textbf{x},t)\, d\textbf{x}$. Using \eqref{eq2:2} we obtain
\begin{equation}\label{eq4:4}
    \int_\Omega \frac{\Delta u_i(\Vect{x},t)}{u_i(\Vect{x},t)}\,d\Vect{x}=\sum_{k=1}^m\int_{\Omega}\frac{1}{u_i^2(\Vect{x},t)}\left(\frac{\partial u_i(\Vect{x},t)}{\partial x_k}\right)^2d\Vect{x}\geq 0.
\end{equation}
With the help of \eqref{eq4:2} and \eqref{eq4:4}, it follows from \eqref{eq4:3} that
$$
\frac{d}{dt}(\overline{\ln u_i}(t))>\xi>0,
$$
and hence
$$
\overline{\ln u_i}(t)>\xi t+k_i,\quad i=1,\ldots,n.
$$

As $u_i(\Vect{x},t)>0$, and $\ln$ is a convex function, the integral Jensen's inequality implies that
\begin{equation}\label{eq4:5}
    \overline{\ln u_i}(t)\leq \ln \bar{u}_i(t).
\end{equation}
Therefore,
$$
\bar{u}_i(t)>C_0\exp\{\xi t\},\quad t\geq 0,
$$
which is impossible because $\Vect{\hat{w}}(\Vect{x})$ is a Lyapunov stable solution to \eqref{eq2:1}.
\end{proof}

\begin{definition}
We shall say that $\Vect{\hat{w}}(\Vect{x})\in S_n(\Omega)$ is a distributed evolutionary stable state \emph{(}DESS\emph{)} if
\begin{equation}\label{eq4:6}
\int_{\Omega}\langle \Vect{\hat{w}}(\Vect{x}),\,\Vect{Au}(\Vect{x},t)\rangle d\Vect{x}>\int_{\Omega}\langle \Vect{u}(\Vect{x},t),\,\Vect{Au}(\Vect{x},t)\rangle\,d\Vect{x}
\end{equation}
for any $\Vect{u}(\Vect{x},t)\in S_n(\Omega_t)$ from a neighborhood of $\Vect{\hat{w}}(\Vect{x})\in S_n(\Omega)$, $\Vect{u}(\Vect{x},t)\neq \Vect{\hat{w}}(\Vect{x})$.
\end{definition}

Let us introduce Definition \ref{def3:2}.

\begin{definition}\label{def3:2}
Stationary solution $\Vect{\hat{w}}(\Vect{x})\in S_n(\Omega)$ to \eqref{eq2:5} is stable in the sense of the mean integral value if for any $\varepsilon>0$ there exists $\delta>0$ such that for the initial data $\varphi_i(\bf{x})$ of system \eqref{eq2:1}, which satisfy
$$
|\bar{\varphi}_i-\hat{\bar{{w}}}_i|<\delta,\quad i=1,\ldots,n,
$$
where
$$
\bar{\varphi}_i=\int_\Omega\varphi_i(\Vect{x})\,d\Vect{x},\quad \hat{\bar{w}}_i=\int_\Omega \hat{w}_i(\Vect{x})\,d\Vect{x},
$$
it follows that
$$
|\bar{u}_i(t)-\hat{\bar{w}}_i|<\varepsilon
$$
for any $i=1,\ldots,n$ and $t>0$.

Here $\bar{u}_i(t)=\int_\Omega u_i(\Vect{x},t)\,d\Vect{x}$ and $u_i(\Vect{x},t),\,i=1,\ldots,n$ are the solutions to \eqref{eq2:1}, $\Vect{u}(\Vect{x},t)\in S_n(\Omega_t)$.
\end{definition}

From Definitions \ref{def3:1} and \ref{def3:2} it follows that stability in the mean integral sense is weaker than Lyapunov stability. For instance, consider a simple example: let $g(x,t)\in W_2^1,\,x\in[0,1]$ and can be represented as
$$
g(x,t)=c_0(t)+\sum_{k=1}^\infty c_k(t)\cos k\pi x.
$$
Let us suppose that $c_0(t)\to 0$ when $t\to\infty$. Then $\bar{g}(t)=\int_0^1 g(x,t)\,dx\to 0$ when $t\to\infty$, whereas $\|g(x,t)\|_{W_2^1}=\sum_{k=1}^\infty c_k^2(t)(1+k^2\pi^2)$ does not necessarily tend to zero.

\begin{theorem}\label{th4:2}
Let $\Vect{\hat{w}}\in\Int S_n$ be a spatially homogeneous solution to \eqref{eq2:5} \emph{(}i.e., $\Vect{\hat{w}}\in\Int S_n$ is a rest point of \eqref{eq1:1}\emph{)}. If $\Vect{\hat{w}}$ is DESS then $\Vect{\hat{w}}$ is an asymptotically stable solution to \eqref{eq2:1} in the sense of the mean integral value.
\end{theorem}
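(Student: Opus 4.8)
The plan is to adapt the classical argument that an ESS is asymptotically stable for the mean-field replicator equation, replacing the relative entropy of $\Vect{v}$ by a functional built from the spatial averages $\overline{\ln u_i}(t)=\int_\Omega\ln u_i\,d\Vect{x}$, and to reuse the three ingredients already assembled in the proof of the preceding theorem: the evolution law \eqref{eq4:3} for $\overline{\ln u_i}$, the sign \eqref{eq4:4} of the diffusive contribution, and Jensen's inequality \eqref{eq4:5}. The point of working with $\overline{\ln u_i}$ rather than with $\Vect{\bar u}(t)$ directly is that the dynamics of the averages $\bar u_i(t)$ is not closed (integrating \eqref{eq2:1} leaves the unresolved term $\int_\Omega u_i(\Vect{Au})_i\,d\Vect{x}$), whereas \eqref{eq4:3} gives a clean, globally computable law.

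Concretely, I would introduce the candidate Lyapunov function
$$V(t)=\sum_{i=1}^n\hat{w}_i\bigl(\ln\hat{w}_i-\overline{\ln u_i}(t)\bigr),$$
where $\Vect{u}(\Vect{x},t)$ solves \eqref{eq2:1} with data near $\Vect{\hat w}$ and $\sum_i\hat w_i=1$ since $\Vect{\hat w}\in\Int S_n$. Differentiating and substituting \eqref{eq4:3} yields
$$\dot V(t)=-\int_\Omega\bigl[\langle\Vect{\hat w},\Vect{Au}\rangle-\langle\Vect{u},\Vect{Au}\rangle\bigr]\,d\Vect{x}-\sum_{i=1}^n\hat w_i d_i\int_\Omega\frac{\Delta u_i}{u_i}\,d\Vect{x},$$
where the collapse of the average-fitness contribution uses $\sum_i\hat w_i=1$ together with $f^{sp}=\int_\Omega\langle\Vect{Au},\Vect{u}\rangle\,d\Vect{x}$ from \eqref{eq2:4}. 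The first integral is strictly positive by the DESS inequality \eqref{eq4:6}, and the second is nonnegative by \eqref{eq4:4}; hence $\dot V(t)<0$ so long as $\Vect{u}(\cdot,t)\neq\Vect{\hat w}$ remains in the DESS neighborhood.

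To turn this into mean-integral stability I would compare $V$ with the relative entropy of the averages, $D(t)=\sum_i\hat w_i\ln(\hat w_i/\bar u_i(t))$. Jensen's inequality \eqref{eq4:5} gives $\overline{\ln u_i}\le\ln\bar u_i$, so that $V(t)\ge D(t)\ge 0$ by the Gibbs inequality (both $\Vect{\hat w}$ and $\Vect{\bar u}(t)$ lie in $S_n$ by \eqref{eq2:3}), with $D(t)=0$ iff $\Vect{\bar u}(t)=\Vect{\hat w}$, and with $D$ controlling $\sum_i|\bar u_i-\hat w_i|$ through Pinsker's inequality. Since $V$ is strictly decreasing, $D(t)\le V(t)\le V(0)$ for all $t$, which delivers the $\varepsilon$--$\delta$ statement of Definition \ref{def3:2} once $V(0)$ is small; and since $V$ is bounded below and monotone, $\dot V(t)\to 0$, whence a LaSalle/Barbalat-type argument forces the DESS gap along the trajectory to vanish and hence $\Vect{\bar u}(t)\to\Vect{\hat w}$, i.e. asymptotic stability in the mean.

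The main obstacle I anticipate is a mismatch of topologies, and it appears twice. First, \eqref{eq4:6} is postulated only for $\Vect{u}$ in a $W_2^k$-neighborhood of $\Vect{\hat w}$, whereas $V$ controls only the averages $\Vect{\bar u}$; one must therefore guarantee that the trajectory stays where \eqref{eq4:6} is available before the Lyapunov decrease can be invoked — the usual bootstrap, but genuinely delicate here because smallness of $V$ (equivalently of $D$) does not bound the full $W_2^k$-norm of $\Vect{u}-\Vect{\hat w}$. Second and relatedly, controlling $V(0)$ needs closeness of $\overline{\ln\varphi_i}$ rather than merely of $\bar\varphi_i$, so the passage from the hypothesis $|\bar\varphi_i-\hat{\bar w}_i|<\delta$ of Definition \ref{def3:2} to smallness of $V(0)$, together with the attractivity step, is exactly where the argument must be handled with care.
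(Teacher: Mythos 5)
Your proposal follows the paper's own argument in its essentials: the paper uses exactly the functional $V(\Vect{u}(t))=\sum_{i=1}^n\hat{w}_i\,\overline{\ln u_i}(t)$ (yours is $\sum_i\hat{w}_i\ln\hat{w}_i-V$, so only the sign of the monotonicity flips), establishes $\dot{V}>0$ from \eqref{eq4:3}, \eqref{eq4:4} and the DESS inequality \eqref{eq4:6}, and bounds $V$ above by $\sum_i\hat{w}_i\ln\hat{w}_i$ via Jensen's inequality for sums. Where you diverge is the endgame: the paper concludes from the strict Lyapunov functional that $\overline{\ln u_i}(t)\to\ln\hat{w}_i$, then applies \eqref{eq4:5} to get $\lim_{t\to\infty}\bar{u}_i(t)\geq\hat{w}_i$, and observes that strict inequality for even one $i$ would force $\sum_i\bar{u}_i(t)>1$, contradicting \eqref{eq2:3} --- an elementary squeeze by the simplex constraint that needs neither Gibbs, nor Pinsker, nor a LaSalle/Barbalat argument. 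Your route through $D(t)=\sum_i\hat{w}_i\ln\bigl(\hat{w}_i/\bar{u}_i(t)\bigr)$ with $0\leq D(t)\leq V(t)\leq V(0)$ buys an explicit quantitative $\varepsilon$--$\delta$ bound on $\sum_i|\bar{u}_i-\hat{w}_i|$ that the paper never makes explicit, at the cost of that extra machinery. Finally, the two obstacles you flag are genuine, but be aware the paper does not resolve them either: its proof simply restricts attention to solutions lying in a neighborhood $U^\delta$ of $\Vect{\hat{w}}$ in $B(\Omega_t)$ where \eqref{eq4:6} is available, i.e.\ it quietly assumes the trajectory stays where the DESS inequality holds and takes initial data close in the strong norm rather than merely in mean; so your bootstrap worry (that smallness of $|\bar{\varphi}_i-\hat{w}_i|$ from Definition \ref{def3:2} controls neither $V(0)$ nor the $W_2^k$-distance) identifies a looseness present in the published proof itself, not a defect unique to your plan.
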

\begin{proof}
Consider the set of function $\Vect{u}(\Vect{x},t)\in S_n(\Omega_t)$ belonging to a neighborhood $U^\delta$ of $\Vect{\hat{w}}$ in the space $B(\Omega_t)$. Define the functional
\begin{equation*}
    V(\Vect{u}(t))=\sum_{i=1}^n \hat{w}_i\int_{\Omega}\ln u_i(\Vect{x},t)\, d\Vect{x}=\sum_{i=1}^n\hat{w}_i\overline{\ln u_i}(t)\,.
\end{equation*}
We can always choose $U^\delta$ such that $u_i(\Vect{x},t)>0$ for all $i$ because $\hat{w}_i>0$. Using \eqref{eq4:4} we obtain
\begin{equation*}
    \frac{d\,\overline{\ln u}_i}{dt}\geq \int_{\Omega}(\Vect{Au})_i\,d\Vect{x}-\int_{\Omega}\langle \Vect{Au},\,\Vect{u}\rangle\, d\Vect{x},\quad \Vect{u}\in U^\delta,
\end{equation*}
where we suppress the dependence on $t$ and $\Vect{x}$ for simplicity.
From \eqref{eq4:6} we have that
\begin{equation*}
    \begin{split}
       \frac{dV}{dt} &= \dot{V}\geq \sum_{i=1}^n \hat{w}_i\int_{\Omega} (\Vect{Au})_i\,d\Vect{x}-\int_{\Omega}\langle \Vect{Au},\,\Vect{u}\rangle\, d\Vect{x}= \\
         & =\int_{\Omega}\langle \Vect{\hat{w}},\,\Vect{Au}\rangle\, d\Vect{x}-\int_{\Omega}\langle \Vect{Au},\,\Vect{u}\rangle\, d\Vect{x}>0,\quad \Vect{u}\in U^\delta.
     \end{split}
\end{equation*}
The functional $V$ is bounded. Indeed, applying Jensen's inequality for sums
$$
\sum_{i=1}^n p_i\ln q_i\leq \ln \left(\sum_{i=1}^np_iq_i\right),\quad \sum_{i=1}^np_i=1,\quad q_i>0,
$$
we obtain
$$
\int_{\Omega}\sum_{i=1}^n\hat{w}_i\ln\frac{u_i}{\hat{w}_i}\,d\Vect{x}\leq \ln \int_{\Omega}\sum_{i=1}^nu_i\,d\Vect{x}=\ln 1=0.
$$
Hence,
$$
V(\Vect{u})=\sum_{i=1}^n\int_{\Omega}\hat{w}_i\ln u_i(\Vect{x},t)\,d\Vect{x}=\sum_{i=1}^n\hat{w}_i\overline{\ln u_i}(t)\leq\sum_{i=1}^n\hat{w}_i\ln \hat{w}_i.
$$
Since $\dot{V}(\Vect{u})>0$ for all $\Vect{u}\in U^{\delta}$ then $V$ is a strict Laypunov functional for \eqref{eq2:1} and therefore
$$
\lim_{t\to\infty}\overline{\ln u_i}(t)=\ln\hat{w}_i,\quad i=1,\ldots,n.
$$
From the last inequality and using \eqref{eq4:5} we have
\begin{equation}\label{eq4:13}
    \lim_{t\to\infty}\bar{u}_i(t)\geq \hat{w}_i,\quad i=1,\ldots,n.
\end{equation}
If \eqref{eq4:13} is strict at least for one $i$, then
$$
\sum_{i=1}^n\lim_{t\to\infty}\bar{u}_i(t)>\sum_{i=1}^n\hat{w}_i=1
$$
which is impossible since
$$
\sum_{i=1}^n\bar{u}_i(t)=1
$$
for any $t\geq 0$. This proves the theorem.
\end{proof}

\begin{remark}Let the conditions of Theorem \ref{th4:2} be met, then
\begin{equation}\label{eq4:8}
    \lim_{t\to\infty}f^{sp}(t)\geq \lim_{t\to\infty}f^{loc}(t)=\langle \Vect{\hat{w}},\Vect{A\hat{w}}\rangle.
\end{equation}
Indeed, in this case
$$
\lim_{t\to\infty}\int_{\Omega}\langle\hat{\Vect{w}},\Vect{Au}(\Vect{x},t)\rangle d\Vect{x}=\lim_{t\to\infty}\langle\Vect{\hat{w}},\Vect{A\bar{u}}\rangle=\langle\Vect{\hat{w}},\Vect{A\hat{w}}\rangle.
$$

Using \eqref{eq4:6}  we obtain \eqref{eq4:8}.
\end{remark}
\section{Sufficient conditions for DESS}\label{sec5}
Application of the results obtained so far for particular distributed replicator systems reduces to the problem of checking conditions for DESS for spatially homogeneous stationary solution $\hat{\Vect{w}}\in\Int S_n$

Let us introduce the following function
\begin{equation}\label{eq5:1}
    M(t)=\int_{\Omega}\langle\Vect{u}(\Vect{x},t),\Vect{Au}(\Vect{x},t)\rangle\, d\Vect{x}-\int_{\Omega}\langle\Vect{\hat{w}},\Vect{A{u}}(\Vect{x},t)\rangle\, d\Vect{x}.
\end{equation}
If the stationary point $\Vect{\hat{w}}\in\Int S_n$ meets the conditions for DESS then $M(t)<0$ in a neighborhood of $\Vect{\hat{w}}$ in the space $W_2^k(\Omega)$ for functions $\Vect{u}(\Vect{x},t)\in S_n(\Omega_t)$ from neighborhood of $\hat{\Vect{w}}(\Vect{x})$ in the space $S_n(\Omega_t)$, such that $\Vect{u}(\Vect{x},t)\neq\Vect{\hat{w}}(\Vect{x}),\,\Vect{x}\in\Omega,\,t\geq 0$.

Let
\begin{equation}\label{eq5:2}
    u_i(\Vect{x},t)=\hat{w}_i+c_i^0(t)+\sum_{s=1}^\infty c_i^s(t)\psi_s(\Vect{x}),\quad i=1\ldots,n,
\end{equation}
where $\psi_s(\Vect{x}),\,s=1,2,\ldots$ are the eigenfunctions of problem \eqref{eq3:2}. Due to \eqref{eq2:3}
\begin{equation}\label{eq5:3}
    \sum_{i=1}^nc_i^0(t)=0.
\end{equation}
Functions $c_i^0(t)$ are not equal to zero for $t>0$ simultaneously since $\bar{\Vect{u}}(t)\neq \Vect{\hat{w}}$. Additionally, from
$$
0\leq \int_{\Omega}u_i(\Vect{x},t)\,d\Vect{x}=\bar{u}_i(t)=\hat{w}_i+c_i^0(t)\leq 1,
$$
it follows that
$$
-\hat{w}_i\leq c_i^0(t)\leq 1-\hat{w}_i,\quad i=1,\ldots,n,
$$
i.e., these functions are bounded, which implies that there exists $\delta>0$ such that
\begin{equation}\label{eq5:4}
|\Vect{c}^0(t)|^2=\sum_{i=1}^n|c_i^0(t)|^2\geq \delta^2>0.
\end{equation}
Let us fix $\varepsilon>0$ and consider an $\varepsilon$-neighborhood of point $\Vect{\hat{w}}$ in $W_2^k(\Omega)$. From the embedding theorems
$$
\|\Vect{u}(\Vect{x},t)-\Vect{\hat{w}}\|^2_{L_2(\Omega)}\leq K\|\Vect{u}(\Vect{x},t)-\Vect{\hat{w}}\|^2_{W_2^k(\Omega)}\leq K_1\varepsilon^2,
$$
where $K,\,K_1$ are positive constants. This means that
\begin{equation}\label{eq5:5}
    \|\Vect{u}(\Vect{x},t)-\Vect{\hat{w}}\|^2_{L_2(\Omega)}=(c_i^0(t))^2+\sum_{s=1}^{\infty}(c_i^s(t))^2\leq K_1\varepsilon^2.
\end{equation}

\begin{theorem}\label{th5:1}
The stationary point $\Vect{\hat{w}}\in\Int S_n$ is a DESS if
\begin{equation}\label{eq5:6}
\langle \Vect{Ac}^0(t),\Vect{c}^0(t)\rangle\leq -\gamma^2|\Vect{c}^0(t)|^2,\quad \gamma>0
\end{equation}
for any $\Vect{c}^0(t)$ satisfying \eqref{eq5:3}
\end{theorem}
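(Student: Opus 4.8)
The plan is to prove the theorem by showing that the DESS inequality \eqref{eq4:6}, equivalently $M(t)<0$ for the quantity in \eqref{eq5:1}, reduces to a sum of quadratic forms over the spatial modes whose leading term is exactly the left-hand side of \eqref{eq5:6}.

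First I would simplify $M(t)$. Writing $\Vect{v}=\Vect{u}-\Vect{\hat w}$ and using $\Vect{u}=\Vect{\hat w}+\Vect{v}$ in \eqref{eq5:1} gives $M(t)=\int_\Omega\langle\Vect{v},\Vect{A\hat w}\rangle\,d\Vect{x}+\int_\Omega\langle\Vect{v},\Vect{Av}\rangle\,d\Vect{x}$. Since $\Vect{\hat w}\in\Int S_n$ is a rest point of \eqref{eq1:1}, the characterization \eqref{eq1:2} gives $\Vect{A\hat w}=\beta\Vect{1}_n$, so the first integral equals $\beta\sum_{i=1}^n\int_\Omega v_i\,d\Vect{x}=\beta\sum_{i=1}^n c_i^0(t)$, which vanishes by the constraint \eqref{eq5:3}. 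Hence $M(t)=\int_\Omega\langle\Vect{v},\Vect{Av}\rangle\,d\Vect{x}$ is a pure quadratic form in the perturbation.

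Next I would diagonalize this integral across modes. Substituting the expansion \eqref{eq5:2} and invoking the orthonormality \eqref{eq3:3} together with $\int_\Omega\psi_s\,d\Vect{x}=0$ for $s\ge1$ and $\int_\Omega d\Vect{x}=1$, every cross term between distinct eigenfunctions drops out, leaving $\int_\Omega v_iv_j\,d\Vect{x}=c_i^0c_j^0+\sum_{s\ge1}c_i^sc_j^s$. Contracting with $a_{ij}$ yields the Parseval-type identity $M(t)=\langle\Vect{Ac}^0,\Vect{c}^0\rangle+\sum_{s\ge1}\langle\Vect{Ac}^s,\Vect{c}^s\rangle$. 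The hypothesis \eqref{eq5:6} then controls the zeroth mode, $\langle\Vect{Ac}^0,\Vect{c}^0\rangle\le-\gamma^2|\Vect{c}^0|^2\le-\gamma^2\delta^2<0$ by \eqref{eq5:4}, so everything hinges on the tail $\sum_{s\ge1}\langle\Vect{Ac}^s,\Vect{c}^s\rangle$.

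The main obstacle is precisely this tail. The integral invariant \eqref{eq2:3} imposes the sum-zero condition \eqref{eq5:3} only on the averaged mode $\Vect{c}^0$, so the higher modes $\Vect{c}^s$ lie off the hyperplane on which \eqref{eq5:6} is assumed, and $\langle\Vect{Ac}^s,\Vect{c}^s\rangle$ can a priori be positive. I would bound the tail crudely by $\sum_{s\ge1}\langle\Vect{Ac}^s,\Vect{c}^s\rangle\le\|\Vect{A}\|\sum_{s\ge1}|\Vect{c}^s|^2$ and invoke \eqref{eq5:5}, which keeps $|\Vect{c}^0|^2+\sum_{s\ge1}|\Vect{c}^s|^2\le K_1\varepsilon^2$ on the $\varepsilon$-neighborhood, so as to compare the strictly negative zeroth-mode bound $-\gamma^2\delta^2$ against the $O(\varepsilon^2)$ tail. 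The delicate point to make airtight is the interplay between $\delta$ in \eqref{eq5:4} and $\varepsilon$ in \eqref{eq5:5}: the conclusion $M(t)<0$ follows only if the averaged perturbation is not dominated by the spatially oscillating modes, i.e. if $\gamma^2\delta^2$ genuinely exceeds $\|\Vect{A}\|$ times the higher-mode energy, which is where the neighborhood must be chosen with care.
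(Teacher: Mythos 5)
Your proposal is correct and follows essentially the same route as the paper's proof: the same eigenfunction expansion \eqref{eq5:2} with Parseval-type diagonalization, the same cancellation of the linear term $\langle\Vect{c}^0,\Vect{A\hat{w}}\rangle$ via stationarity ($\Vect{A\hat{w}}=\beta\Vect{1}_n$) together with \eqref{eq5:3}, and the same operator-norm bound on the higher-mode tail, yielding $M(t)<-\gamma^2\delta^2+K_1K_2\varepsilon^2$. The ``delicate point'' you flag at the end is handled in the paper exactly as you suggest: the neighborhood is chosen with $\varepsilon<\gamma\delta/\sqrt{K_1K_2}$, so that the zeroth-mode bound from \eqref{eq5:4} dominates the $O(\varepsilon^2)$ tail from \eqref{eq5:5}.
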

\begin{proof}
After inserting \eqref{eq5:2} into \eqref{eq5:1} we obtain
$$
M(t)=\langle \Vect{Ac}^0(t),\Vect{c}^0(t) \rangle+\sum_{s=1}^\infty \langle \Vect{Ac}^s(t),\Vect{c}^s(t) \rangle-\langle \Vect{c}^0,\Vect{A\hat{w}}\rangle.
$$
Since $\Vect{\hat{w}}\in S_n$ is a stationary state, then
$$
(\Vect{A\hat{w}})_1=\ldots=(\Vect{A\hat{w}})_n=\beta,
$$
therefore, using \eqref{eq5:3},
$$
\langle \Vect{c}^0,\Vect{A\hat{w}}\rangle=\sum_{i=1}^n(\Vect{A\hat{w}})_ic_i^0=\beta\sum_{i=1}^nc_i^0(t)=0.
$$

On the other hand,
$$
|\langle \Vect{Ac}^s(t),\Vect{c}^s(t)\rangle|\leq K_2|\Vect{c}^s(t)|^2,\quad K_2>0,
$$
and, using \eqref{eq5:6}, we obtain
$$
M(t)<-\gamma^2|\Vect{c}^0(t)|^2+K_2\sum_{s=1}^\infty|\Vect{c}^s(t)|^2.
$$
Using \eqref{eq5:4} and \eqref{eq5:5} we arrive at the estimate
$$
M(t)<(-\gamma^2\delta^2+K_1K_2\varepsilon^2).
$$
If $\varepsilon<\gamma\delta/\sqrt{K_1K_2}$, then $M(t)<0$.
\end{proof}

\begin{remark} In the case of replicator equations with symmetric $\Vect{A}$ condition \eqref{eq5:6} means that  each orbit converges to the stationary point $\Vect{\hat{w}}\in\Int S_n$ \cite{hofbauer1998ega}.\end{remark}

\begin{example}
Consider system \eqref{eq2:1} with the matrix
$$
\Vect{A}=\left(
           \begin{array}{cc}
             a & b \\
             c & d \\
           \end{array}
         \right),\quad a<c,\,d<b.
$$
We have
$$
\langle \Vect{Ac}^0(t),\Vect{c}^0(t)\rangle=-(c_1^0)^2(b-a+c-d),
$$
and hence condition \eqref{eq5:6} is satisfied if $\gamma^2=b-a+c-d>0$.
\begin{figure}
\centering
\includegraphics[width=0.99\textwidth]{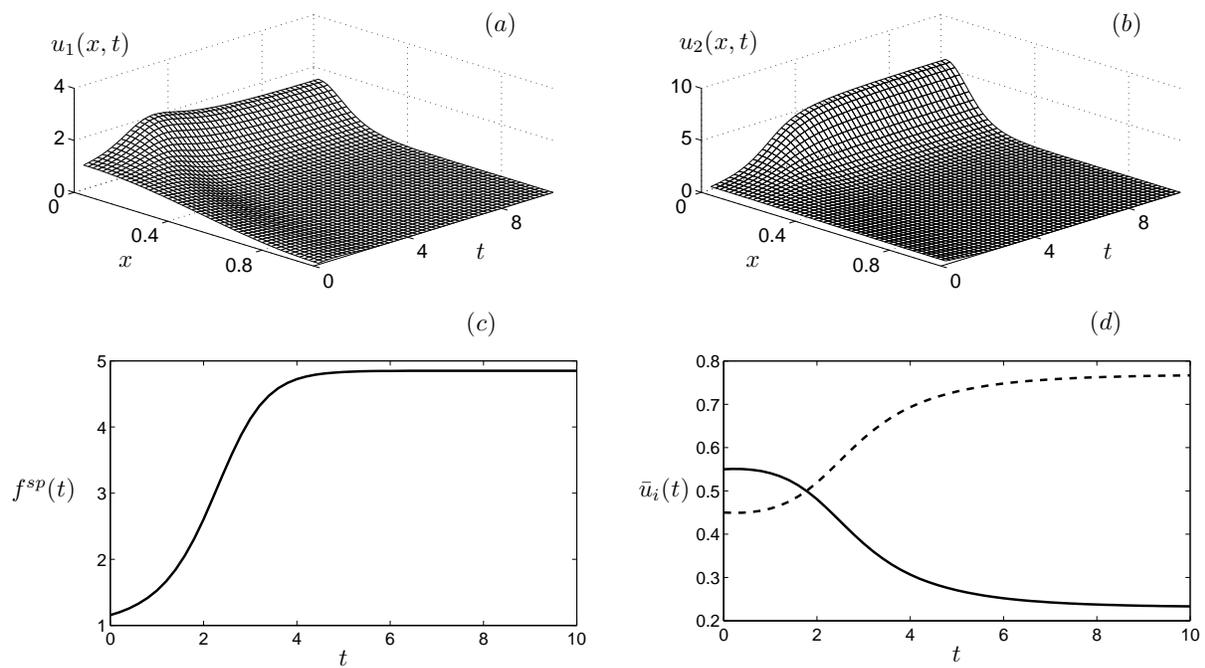}
\caption{Solution to \eqref{eq2:1}, see Example 1 and text for details. $(a),(b)$ Time dependent solutions; $(c)$ Time evolution of $f^{sp}(t)$; $(d)$ $\bar{u}_i(t)=\int_0^1 u_i(x,t)\,dx,\,i=1,2$ are shown}\label{fig5:1}
\end{figure}

In Fig. \ref{fig5:1} some numerical calculations are presented for the case when
$$
\Vect{A}=\left(
           \begin{array}{cc}
             0.8 & 1.1 \\
             1.2 & 0.9 \\
           \end{array}
         \right).
$$
The diffusion coefficients are $d_1=0.03,\,d_2=0.02$. In this case \eqref{eq3:4} is not satisfied and there are spatially heterogeneous stationary solutions to the system \eqref{eq2:1}. The dynamics of the solutions is shown in Fig. \ref{fig5:1}$(a),\,(b)$. Spatially non-uniform solutions are stable in this case.
\begin{figure}[!b]
\centering
\includegraphics[width=0.95\textwidth]{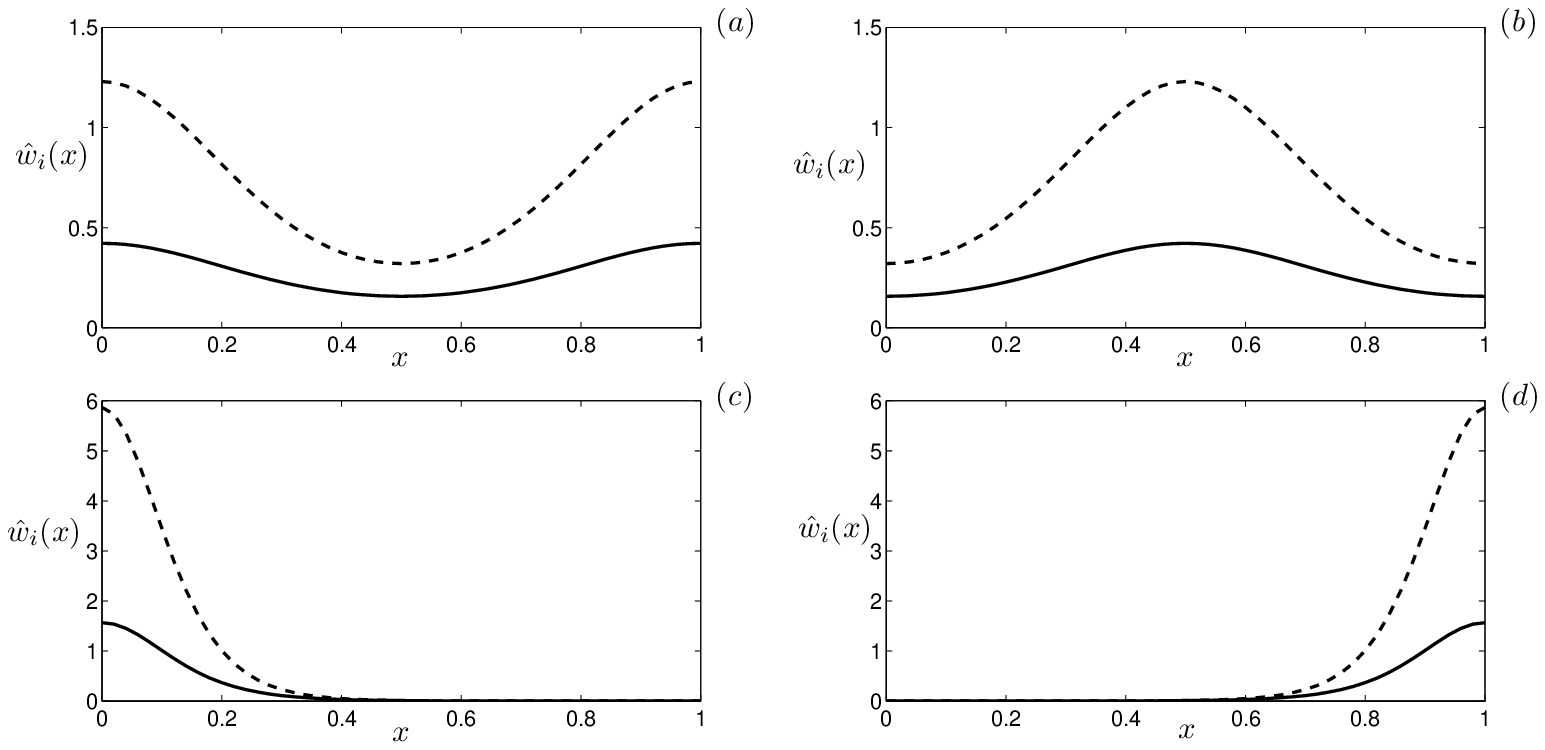}
\caption{Asymptotically stable stationary non-uniform solutions to the problem as in Example~1. See text for details. The numerical scheme is presented in \cite{bratus2006ssc}}\label{fig5:2}
\end{figure}

In Fig. \ref{fig5:2} all possible stable spatially non-uniform solutions are presented, each of which has its own basin of attraction.
\end{example}

\begin{example}
$$
\Vect{A}=\left(
  \begin{array}{ccc}
    \mu & 1 & -1 \\
    -1 & \mu & 1 \\
    1 & -1 & \mu \\
  \end{array}
\right).
$$

In this case,
$$
\langle \Vect{Ac}^0(t),\Vect{c}^0(t)\rangle=\mu\sum_{i=1}^3|c_i^0|^2,
$$
and \eqref{eq5:6} holds if $\mu<0$.
\end{example}

\begin{example}[Hypercycle equation]
Consider a hypercycle system with three members and the matrix
$$
\Vect{A}=\left(
  \begin{array}{ccc}
    0 & 0 & 1 \\
    1 & 0 &0 \\
    0 & 1 & 0 \\
  \end{array}
\right).
$$
From \eqref{eq5:3} $c_1^0=-(c_2^0+c_3^3)$, therefore

\begin{equation*}
\begin{split}
\langle \Vect{Ac}^0(t),\Vect{c}^0(t)\rangle&=-(c_2^0+c_3^3)c_3^0-c_2^0(c_2^0+c_3^3)+c_3^0c_2^0\\
&=-[(c_3^0)^2+c_2^0c_2^0+(c_2^0)^2]<-\frac 12 [(c_3^0)^2+(c_2^0)^2],
\end{split}
\end{equation*}
which yields \eqref{eq5:6}.
\end{example}
\begin{example}\label{ex5:3}
$$
\Vect{A}=\left(
           \begin{array}{ccc}
             a & b & c \\
             c & a & b \\
             b & c & a \\
           \end{array}
         \right).
$$
Matrix $\Vect{A}$ is circulant, and its eigenvalues are
$$
\lambda_1=a+b+c,\quad \lambda_{2,3}=(a-(b+c)/2)\pm i\sqrt{3}(b+c)/2.
$$
Assume $a<(b+c)/2, b>a>c>0,\,bc>a^2$, then $\Re \lambda_{2,3}<0$. $\lambda_1$ has the eigenvector $(1,1,1)$. This vector is orthogonal to any $\Vect{c}^0$ that satisfies \eqref{eq5:3}, hence $\Vect{A}$ is negatively determined on this set.
\end{example}

\begin{example}
$$
\Vect{A}=\left(
           \begin{array}{cccc}
             0 & a_1 & a_2 & a_3 \\
             a_3 & 0 & a_1 & a_2 \\
             a_2 & a_3 & 0 & a_1 \\
             a_1 & a_2 & a_3 & 0 \\
           \end{array}
         \right).
$$
Let $a_1+a_3>a_2$. The eigenvalues of $\Vect{A}$ are given by
\begin{equation*}
    \begin{split}
       \lambda_1 & =a_1+a_2+a_2 \\
        \lambda_2 & =-a_1-a_2-a_3\\
        \lambda_{3,4}&=-a_2\pm i(a_1+a_3)
     \end{split}
\end{equation*}
In the same vein as in Example \ref{ex5:3}, $\lambda_1$ corresponds to the eigenvector $\Vect{u}=(1,1,1,1)$, which is orthogonal to any $\Vect{c}^0$ satisfying \eqref{eq5:3}. The other eigenvalues have negative real parts, therefore $\Vect{A}$ is negatively determined.

Consider the following matrix
$$\Vect{A}\left(
  \begin{array}{cccc}
    0 & 0.5 & 1 & 0.8 \\
    0.8 & 0 & 0.5 & 1 \\
    1 & 0.8 & 0 & 0.5 \\
    0.5 & 1 & 0.8 & 0 \\
  \end{array}
\right)
$$
and the diffusion coefficients $\Vect{d}=(0.03,0.02,0.03,0.02)$. Again we have spatially non-uniform solutions which are stable in the mean integral sense (see Fig.~\ref{fig5:3}). In Fig.~\ref{fig5:5} possible asymptotically stable non-uniform solutions are shown. The details of the numerical scheme are given in \cite{bratus2006ssc}.
\begin{figure}
\centering
\includegraphics[width=0.95\textwidth]{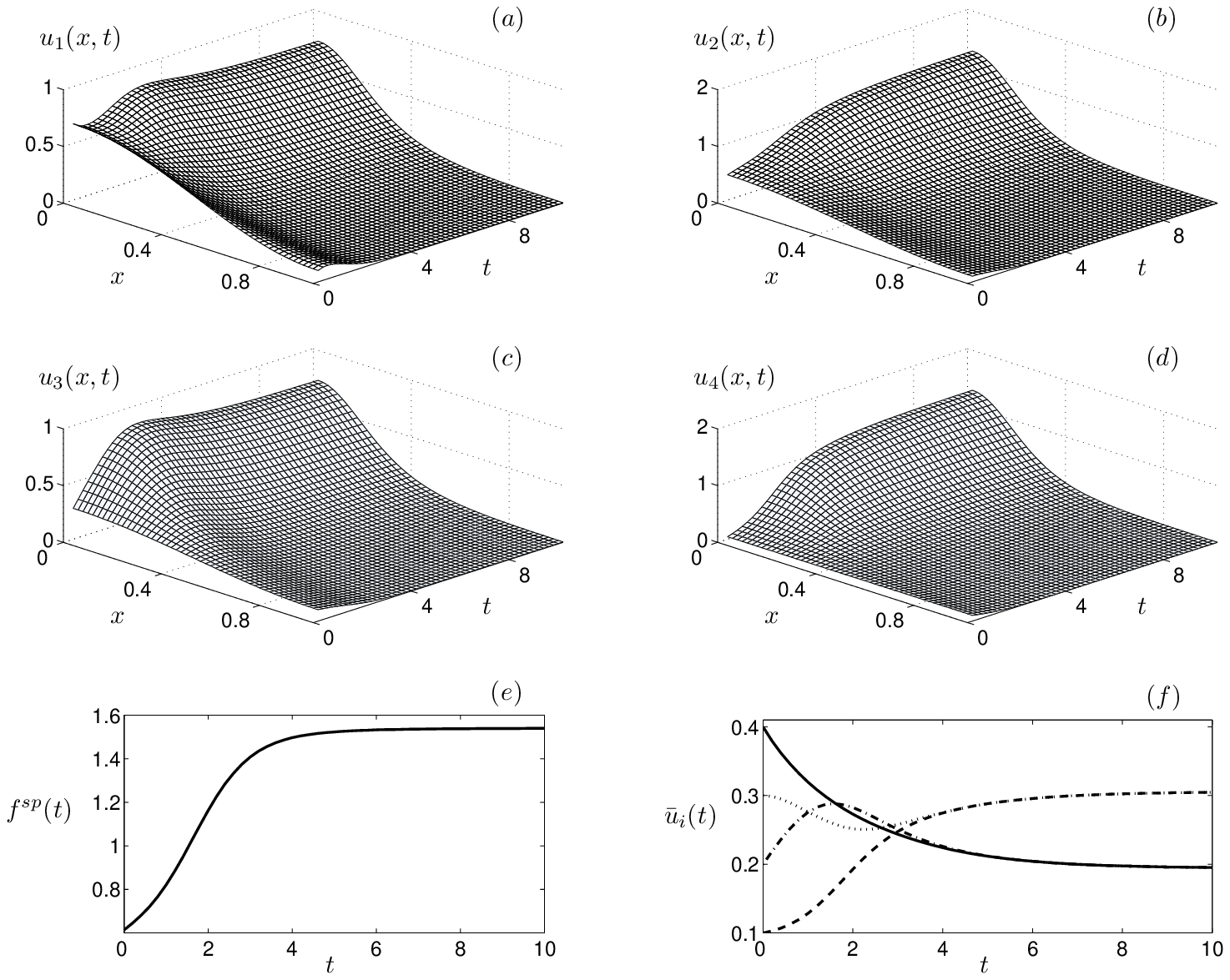}
\caption{Solutions to \eqref{eq2:1}, see Example 5 and text for details. $(a),(b),(c),(d)$ Time dependent solutions; $(e)$ Time evolution of $f^{sp}(t)$; $(f)$ $\bar{u}_i(t)=\int_0^1 u_i(x,t)\,dx,\,i=1,\ldots,4$ are shown}\label{fig5:3}
\end{figure}
\end{example}

\begin{figure}[!t]
\centering
\includegraphics[width=0.95\textwidth]{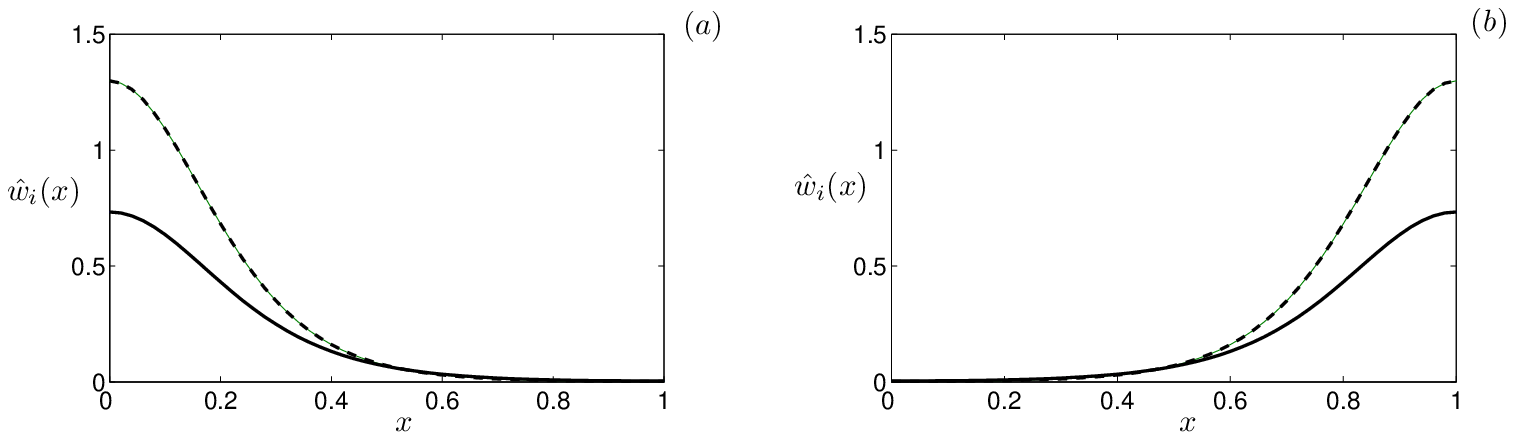}
\caption{Asymptotically stable stationary non-uniform solutions to the problem of Example~5. See text for details. The solutions are such that $\hat{w}_1(x)=\hat{w}_3(x)$ (dashed curve), and $\hat{w}_2(x)=\hat{w}_4(x)$ (solid curve), ${f}^{sp}=1.54$}\label{fig5:5}
\end{figure}

Here the conditions for the spatially homogeneous solutions to be DESS are fulfilled, therefore, this solution is stable in the sense of the mean integral value, which can be seen from the pictures.

\paragraph{Acknowledgments:} The research is supported in part by the Russian Foundation for Basic Research grant \# 10-01-00374. ASN is supported by the grant to young researches from Moscow State University of Railway Engineering.

\end{document}